\newtheorem{Def}{Definition}
\newtheorem{Lem}{Lemma}
\newtheorem{Theo}{Theorem}
\newtheorem{Cor}{Corollary}
\newproof{proof}{Proof}
\begin{document}

\title{Sequentialization for full N-Graphs via sub-N-Graphs}

\begin{frontmatter}

\author[rural,federal]{Ruan V. B. Carvalho}
\ead{ruan.carvalho@ufrpe.br, rvbc@cin.ufpe.br}

\author[federal]{La\'{i}s. S. Andrade}
\ead{lsa@cin.ufpe.br}

\author[federal]{Anjolina G. de Oliveira}
\ead{ago@cin.ufpe.br}

\author[federal]{Ruy. J. G. B. de Queiroz}
\ead{ruy@cin.ufpe.br}

\address[rural]{Depto. de Estat\'{i}stica e Inform\'{a}tica, Universidade Federal Rural de Pernambuco
52171-900 Recife, Pernambuco, Brazil}
\address[federal]{Centro de Inform\'{a}tica, Universidade Federal de Pernambuco,
50740-560 Recife, Pernambuco, Brazil}
\begin{abstract}
Since proof-nets for MLL$^-$ were introduced by Girard (1987), several studies have appeared dealing with its soundness proof.
Bellin \& Van de Wiele (1995) produced an elegant proof based on properties of subnets (empires and kingdoms) and
Robinson (2003) proposed a straightforward generalization of this presentation for proof-nets from sequent
calculus for classical logic.
In 2014 it was presented an extension of these studies to obtain a proof of the sequen\-ti\-a\-li\-za\-tion theorem
for the fragment of N-Graphs with $\wedge, \vee ~ and ~\neg{}$, via the notion of sub-N-Graphs.
N-Graphs is a symmetric natural deduction calculus with multiple conclusions that adopts Danos--Regnier's criterion and has defocussing switchable links.
In this paper, we present a sequentization for full propositional classical N-Graphs, showing how to find a split node in the middle of
the proof even with a global rule for discharging hypothesis.
\end{abstract}

\begin{keyword}
N-Graphs \sep natural deduction \sep sequent calculus \sep MLL$^-$ \sep subnets
\end{keyword}

\end{frontmatter}

\section{Introduction}\label{sec:intro}

Since proof-nets for MLL$^-$ were introduced by Girard \cite{girard:1987}, several studies have been made on its soundness proof.
The first correctness criterion defined for proof-nets was given with the definition of the \emph{no shorttrip condition}:
Girard used trips to define empires and proved that if all terminal formulas in a proof-net {\small $R$} 
are conclusions of times links, then there is at least one terminal formula which splits {\small $R$}.
After Danos--Regnier's work \cite{danos:1989} it has become possible to define empires using their newly defined DR graphs and, 
with this new notion of empires, Girard proved sequentialization for proof-nets with quantifiers \cite{girard:1991}.
Another important advance was achieved by the introduction of a new type of subnets, namely kingdoms.
Once the notion of kingdoms was introduced, Bellin \& Van de Wiele produced an elegant proof of the sequentialization theorem using
simple properties of subnets \cite{bellin:1995}.

A straightforward generalization of this proof was obtained by Robinson \cite{robinson:2003}. He pointed out that
Danos--Regnier's technique relies only on the format of the rules and does not depend on the logic involved.
So he devised a proof system based on the classical sequent calculus and applied the characterization of subnets and the proof of
sequentialization for MLL$^-$ to his proof-nets for classical logic. His proof followed the model defined by 
Bellin \& Van de Wiele \cite{bellin:1995}.

However, this generalization does not cover the existence of so called switchable links with one premise and more than one conclusion, and also
the absence of axiom links. In such systems subnets are not necessarily closed under hereditary premises.
So, if a subnet contains a formula occurrence {\small $A$} and {\small $B$} is 
above\footnote{We say that {\small $B$} is above {\small $A$} when {\small $B$} is a hereditary premise of {\small $A$}.} {\small $A$} 
in the proof-net, then {\small $B$} may not be in this subnet.
Other works in linear logic related to these issues are Lafont's interaction nets (which do not have axiom links) \cite{lafont:1995} and
the system of Blute \emph{et al}, which contains such switchable links \cite{blute:1996} and inspired the proof-nets
for classical logic proposed by F\"{u}hrman \& Pym \cite{fuhrmann:2007}. 
Hughes also proposed a graphical proof system for classical logic where proofs are combinatorial rather than syntactic: 
a proof of {\small $A$} is a homomorphism between a coloured graph and a graph associated with {\small $A$} \cite{hughes:2006}.
McKinley, on the other hand, proposed the expansion nets, a system that focus on canonical representation of cut-free proofs \cite{mckinley:2010}.

Here we present an extended version of a previous work \cite{carvalho:2014}
to perform the sequentialization for N-Graphs, 
a multiple conclusion calculi inspired by the proof-nets for the propositional classical logic developed by 
de Oliveira \cite{oliveira:2001:phd,oliveira:2001}, but with a switchable defocussing link and without axiom links.
One of the main results of this paper, besides giving a new soundness proof for N-Graphs, is the definition of a ge\-ne\-ra\-li\-zed method to
make surgical cuts in proofs for classical logic. This comes with the fact that the presence of the split node in an 
N-Graph can occur essentially anywhere in the proof, unlike proof-nets where the split node is always a terminal formula.

The need to identify the split node is at the heart of our proof of the sequentialization. In order to achieve that we define the north, 
the south and the whole empires of a formula occurrence {\small $A$}.
The first one corresponds to the empires notion of Girard's and Robinson's proof-nets. The second one is the largest sub-N-Graph which has 
{\small $A$} as a premise (defined due to the presence of elimination rules in N-Graphs). The last one is the union of the previously 
defined and it induces a strict ordering over the graph nodes, which will be fundamental to find the split node.

The deduction system defined by N-Graphs has a improper inference rule for the introduction of ``{\small $\rightarrow$}'':
it allows discharging hypotheses.
In this paper we also include this connective and show how to find a split node in the middle of the proof even in the
presence of a global rule for the introduction of ``{\small $\rightarrow$}''.


\section{N-Graphs}

Proposed by de Oliveira \cite{oliveira:2001:phd,oliveira:2001}, N-Graphs is a symmetric natural deduction (ND) calculus with the presence of
structural rules, similar to the sequent calculus. It is a multiple conclusion proof system for classical logic where proofs are built in the
form of directed graphs (\emph{``digraphs''}). Several studies have been developed on N-Graphs since its first publication in 2001 
\cite{oliveira:2001:phd}, like Alves' development on the geometric perspective and cycle treatment towards the normalization of 
the system \cite{alves:2005} and Cruz's definition of intuitionistic N-Graphs \cite{cruz:2013}. A normalization algorithm was presented for
classical N-Graphs \cite{alves:2011}, along with the subformula and separation properties \cite{alves:2009}. More recently a linear time
proof checking algorithm was proposed \cite{andrade:2013}. N-Graphs also inspired a natural deduction system for the logic of lattices \cite{restall:2005}.

\subsection{Proof-Graphs}

The system is defined somewhat like the proof-nets. There is the concept of \emph{proof-graphs}, which are all graphs constructed with the 
valid links where each node is the premise and conclusion of \emph{at most} one link, and the concept of \emph{N-Graphs}, which are the 
correct proof-graphs, i.e. the proof-graphs that represent valid proofs. These constructions are analogous to the definition of proof-structure and
proof-net, respectively.

The links represent atomic steps in a derivation. \emph{Focussing links} are the ones with two premises and one conclusion,
as illustrated by Fig.~\ref{fi:focussing-links} ({\small $\wedge-I$}, {\small $\bot-link$}, {\small $\rightarrow-E$},
{\small $\top-focussing~weak$} and contraction).
The \emph{defocussing links} are the ones with one premise and two conclusions, as shown in 
Fig.~\ref{fi:focussing-links} ({\small $\vee-E$}, {\small $\top-link$}, {\small $\rightarrow-I$},
{\small $\bot-defocussing~weak$} and expansion). All other links
are called \emph{simple links} and have only one premise and one conclusion (Fig.~\ref{fi:simple-links}).

There are two kinds of edges (\emph{``solid''} and \emph{``meta''}) and the second one are labeled with an
\emph{``m''} ($(u,v)^m$). The \emph{solid} indegree (outdegree) of a vertex $v$ is the number of solid edges oriented
towards (away from) it. The \emph{meta} indegree and outdegree are defined analogously.
The set of vertices with indegree (outdegree) equal to zero is the set of premises (conclusions) of the proof-graph
{\small $G$}, and is represented by {\small $PREMIS(G)$} ({\small $CONC(G)$}).
The set of vertices with solid indegree equal to zero and meta indegree equal to one is the set of canceled hypothesis of
$G$ ({\small $HYPOT(G)$}).

\begin{figure}[ht]
\centering
\includegraphics[width=240pt]{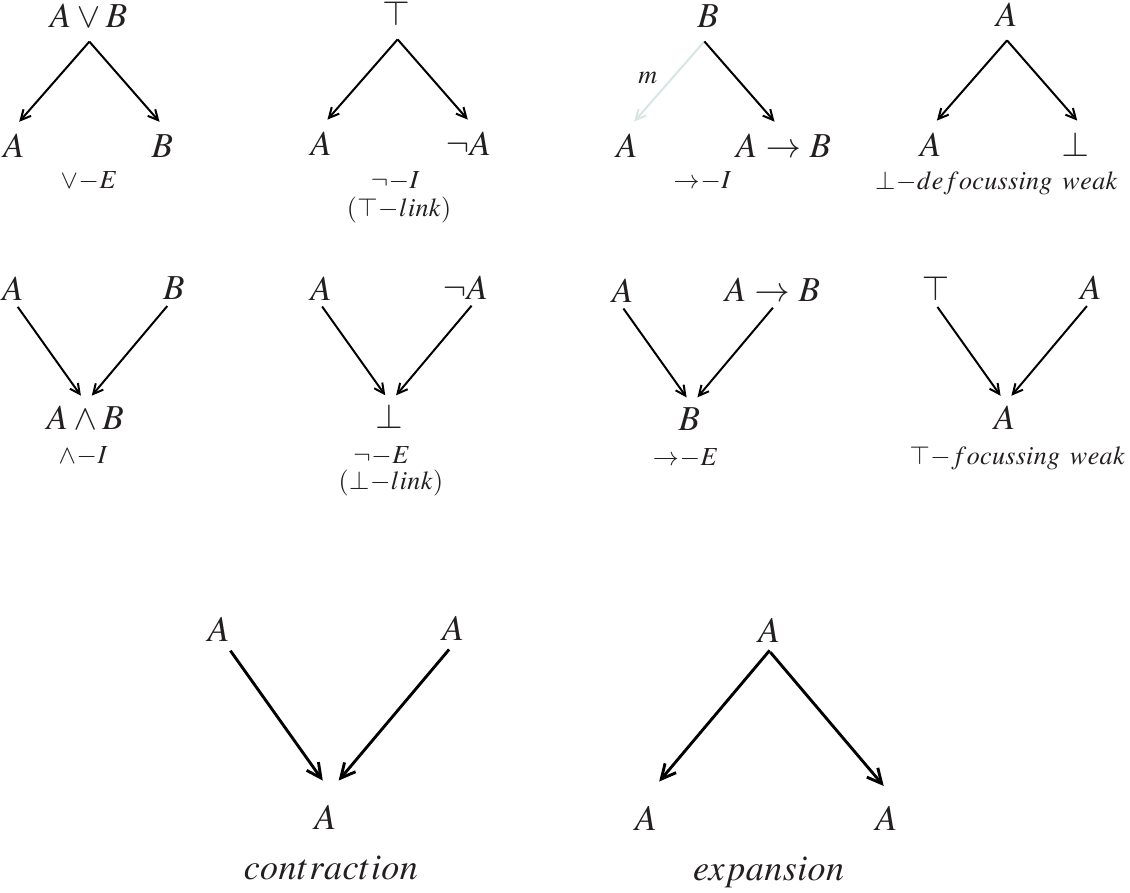}
\caption{Focussing and defocussing links.}
\label{fi:focussing-links}
\end{figure}

\begin{figure}[ht]
\centering
\includegraphics[width=160pt]{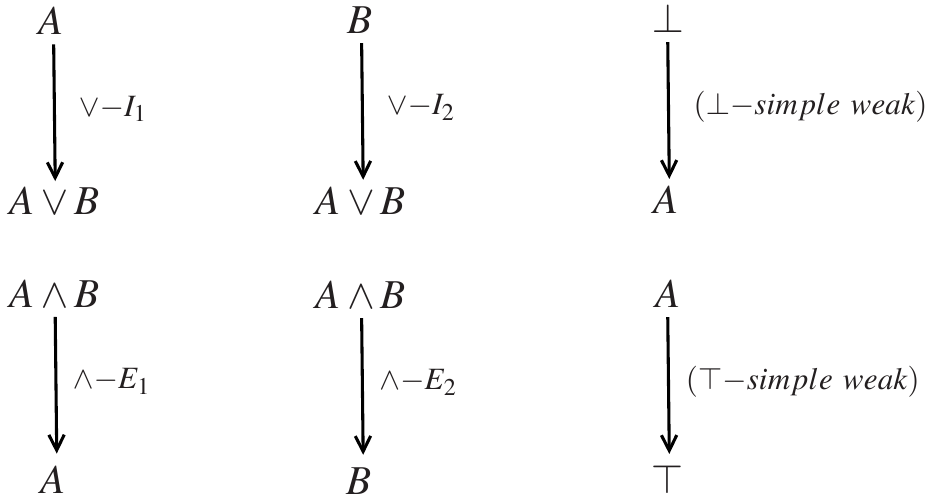}
\caption{Simple links.}
\label{fi:simple-links}
\end{figure}

A \emph{logical link} represents a derivation in ND and we have introduction and elimination links for each connective; {\small $\neg{-I}$} ({\small $\top-link$}) acts as the law of the excluded middle.
A \emph{structural link} expresses the application of a structural rule as it is done in sequent calculus: it enables 
weakening a proof  ({\small $\top-focussing~weak$}, {\small $\bot-defocussing~weak$}, {\small $\top-simple~weak$} and 
{\small $\bot-simple~weak$}), duplicating premises (expansion link) and grouping conclusions in equivalence classes
(contraction link). There is no link to emulate the interchange rule because in a proof-graph the order of 
the premises is not important for the application of derivation rules.

The axioms are represented by proof-graphs with one vertex and no edges. Then, a single node labeled by {\small $A$} is already a valid
derivation: it represents an axiom in sequent calculus ({\small $A \vdash A$}). So here it makes no sense to talk about
the smallest subgraph having {\small $A$} as a conclusion: it would be trivially the vertex {\small $v$} labeled by {\small $A$}.
Therefore the notion of kingdoms, as defined and used by Bellin \& Van de Wiele \cite{bellin:1995} for their sequentialization, is useless for
N-Graphs.

In Fig.~\ref{fi:cycle-examples} there are three proof-graphs. The first one is an invalid ``proof'' for {\small $A \vee B \vdash A \wedge B$}. The others are correct derivations for {\small $A \vee A \vdash A$} and {\small $A \vdash A \wedge A$} (contraction and expansion edges are dotted).

\begin{figure}
\centering
\includegraphics[width=180pt]{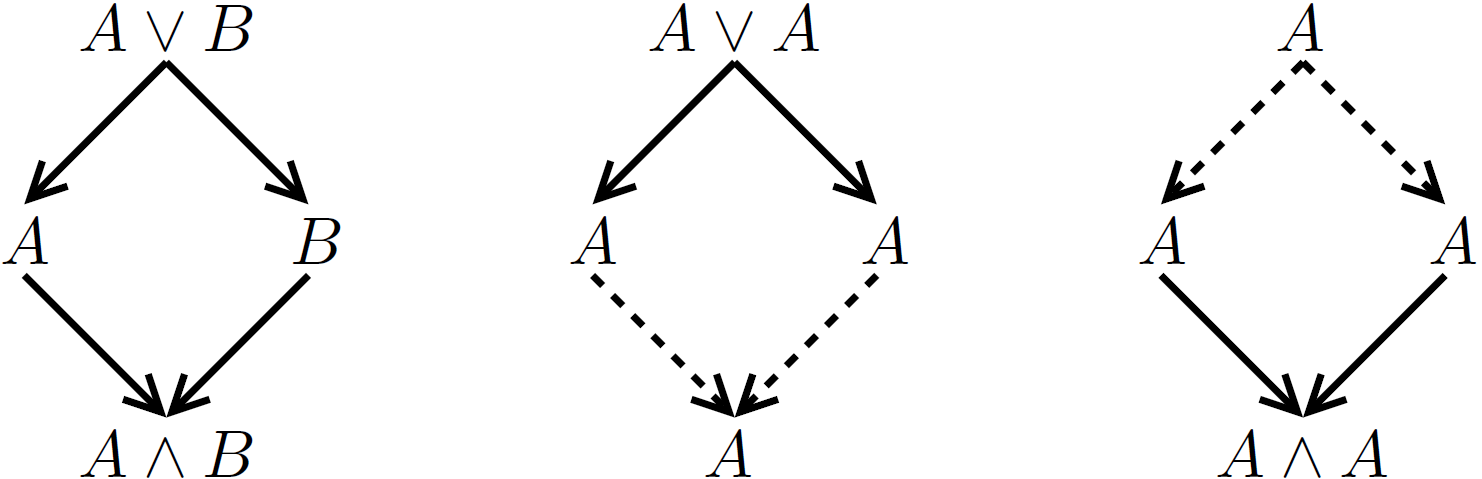}
\caption[Cycles.]{\centering Proof-graphs with cycles.}
\label{fi:cycle-examples}
\end{figure}

\subsubsection{Meta-edge and the scope of the hypothesis}

Besides expansion and contraction links there is the {\small $\rightarrow-I$} link. Both Ungar and Gentzen systems are formulated in such a way that
when the $\rightarrow$ connective is introduced, it may eliminate an arbitrary number of premises (including zero).
In N-Graphs this introduction is made in a more controlled way, which also complicates the task of
identifying inadequate proof-graphs.
For example, the first proof in Fig.~\ref{fi:metaExample} is not correct, but the second one is.

\begin{figure}
\centering
\includegraphics[width=280pt]{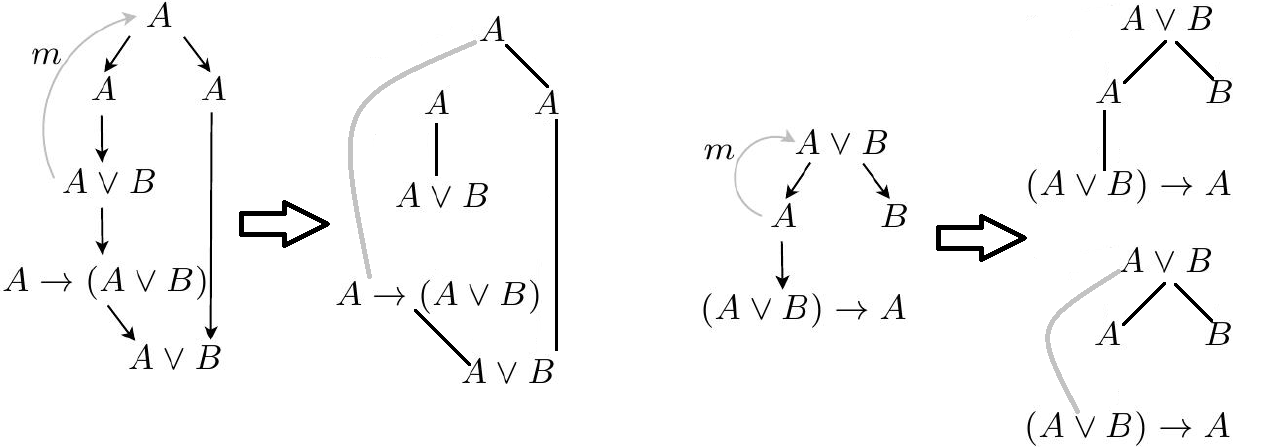}
\caption[Meta edge.]
{\centering Meta edge: an invalid application on the left for {\small $\vdash A \vee B$}
and a sound one on the right for {\small $\vdash (A \vee B) \rightarrow A, B$}.}
\label{fi:metaExample}
\end{figure}

\subsection{Soundness criteria}

Similar to Danos-Regnier criterion \cite{danos:1989}, we define the following subgraphs associated to a proof-graph.

\begin{Def}[Switching]
Given a proof-graph $G$, a \emph{switching graph} $S(G)$ associated with $G$ is a spanning subgraph\footnote{A spanning
subgraph is a subgraph $G_1$ of $G$ containing all the vertices of $G$.} of $G$ in which the following edges are removed:
one of the two edges of every expansion link and one of the two edges of every contraction link.
\end{Def}

\begin{Def}[Meta-switching, virtual edge]
Given a proof-graph $G$, a \emph{meta-switching graph} $S(G)$ associated with $G$ is a switching of $G$ in which every
link with meta-edge {\small $\{(u,w), (u,v)^m\}$} is replaced by one of the following edges: the one from {\small $u$}
to {\small $w$} or an  edge from {\small $v$} to {\small $w$}, which is defined as \emph{virtual edge}.

\end{Def}

\begin{Def}[N-Graph derivation]
A proof-graph $G$ is a \emph{N-Graph derivation} (or \emph{N-Graph} for short) iff  every meta-switching graph associated 
with $G$ is acyclic and connected.
\end{Def}

The focussing and defocussing links may also be classified according to their semantics.
The links {\small $\wedge-I$}, {\small $\bot-link$}, {\small $\rightarrow-E$},
{\small $\top-focussing~weak$} and \emph{expansion} are called \emph{conjunctive}.
The \emph{disjunctive} links are: {\small $\vee-E$}, {\small $\top-link$}, {\small $\rightarrow-I$},
{\small $\bot-defocussing~weak$} and \emph{contraction}. 
Here contraction and expansion draw attention: their geometry contradicts their semantic and they are switchable
(the ones that have one of its edge removed in every meta-switching).
Although focussing, the contraction has a disjunctive semantic; and the expansion is a conjunctive
link, even though defocussing. This means the formula occurences this links connect in a proof-graph
must be already connected some other way in order to the proof to be sound.
So the second and third proof-graphs in Fig. \ref{fi:cycle-examples} are N-Graphs, but the first one is not because its cycle is not valid.


The {\small $\rightarrow-I$} also plays an important role in the soundness criteria.
The premise of the link ({\small $B$}) and the canceled hypothesis ({\small $A$}) need to be already connected some other
way in the proof for it to be sound. Thus the meta-switching must choose to connect {\small $A \rightarrow B$} to {\small $A$} or
{\small $B$}.
In the first proof-graph of Fig. \ref{fi:metaExample} the conclusion of {\small $\rightarrow-I$} is
{\small $A \rightarrow (A \vee B)$}, so this formula already carries a dependency on {\small $A$} and the meta-edge
removes it from the set of premises. However, there is another occurrence of {\small $A$}, which is used by the
{\small $\rightarrow-E$} link to obtain a ``proof'' of {\small $\vdash A \vee B$}.

The soundness criteria captures this when the meta-switching choses the virtual edge, which links {\small $A \rightarrow (A \vee B)$} and {\small $A$}, and the result is not a tree. It does not occur with the other proof-graph of Fig.~\ref{fi:metaExample}: all the two meta-switchings are acyclic and connected.

Soundness and completeness of the system were proved through a mapping between N-Graphs and $LK$ (sequent calculus for classical logic)
\cite{oliveira:2001:phd,oliveira:2001} and in Section \ref{sec:sequent} we give a new proof of sequentialization.


\section{Sub-N-Graphs}

\begin{Def}[sub-N-Graph]
We say that {\small $H$} is a \emph{subproof-graph} of a proof-graph {\small $G$} if {\small $H$} is a subgraph of {\small $G$}
and {\small $H$} is a proof-graph.
If a vertex {\small $v$} labeled by a formula occurrence {\small $A$} is such that {\small $v \in PREMIS(H)$} 
({\small $v \in CONC(H)$}), then {\small $A$} is an \emph{upper} (\emph{lower}) \emph{door} of {\small $H$}.
If {\small $H$} is also a N-Graph, then it is a \emph{sub-N-Graph}.
\end{Def}

Let {\small $N_1$} and {\small $N_2$} be sub-N-Graphs of a N-Graph {\small $N$}.

\begin{Lem}[Union \cite{bellin:1995}]\label{lem:union}
{\small $N_1 \cup N_2$} is a N-Graph iff {\small $N_1 \cap N_2 \neq \emptyset$}.
\end{Lem}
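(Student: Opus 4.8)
The plan is to prove both implications by working directly with the soundness criterion: a proof-graph is an N-Graph exactly when every associated meta-switching graph is acyclic and connected. The forward implication is handled by its contrapositive. If $N_1 \cap N_2 = \emptyset$, then $N_1$ and $N_2$ share no vertices, and since each is a subgraph of $N$ with links inherited intact, no edge of $N_1 \cup N_2$ can join a vertex of $N_1$ to a vertex of $N_2$. Hence $N_1 \cup N_2$ is the disjoint union of two nonempty graphs, every meta-switching of it is disconnected, and it cannot be an N-Graph. This yields: if $N_1 \cup N_2$ is an N-Graph, then $N_1 \cap N_2 \neq \emptyset$.

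For the converse, assume $N_1 \cap N_2 \neq \emptyset$. First I would check that $N_1 \cup N_2$ is a proof-graph. Since $N_1$ and $N_2$ are themselves proof-graphs sitting inside $N$, each link of $N$ is either wholly present or wholly absent in $N_i$: a link cannot be split, because a node carrying only part of a link would fail to be the conclusion (or premise) of a valid link. Consequently every link of $N_1 \cup N_2$ is a complete link of $N_1$ or of $N_2$, the at-most-one-link condition is inherited from $N$, and $N_1 \cup N_2$ is a proof-graph. Now fix an arbitrary meta-switching $S$ of $N_1 \cup N_2$; it suffices to show $S$ is acyclic and connected.

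Acyclicity comes for free from the ambient N-Graph $N$. I would extend $S$ to a meta-switching $S^{*}$ of $N$ by keeping all of $S$'s choices on the switchable and meta-edge links of $N_1 \cup N_2$ and making arbitrary choices on the remaining links of $N$. Every edge retained by $S$ is also retained by $S^{*}$, so any cycle of $S$ would be a cycle of $S^{*}$; since $N$ is an N-Graph, $S^{*}$ is acyclic, and therefore so is $S$. For connectivity I would instead use the nonempty intersection. Because links are not split, the restriction of $S$ to $N_1$ is a genuine meta-switching $S_1$ of $N_1$, and likewise $S_2$ on $N_2$; as $N_1$ and $N_2$ are sub-N-Graphs, both $S_1$ and $S_2$ are connected. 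Every edge of $S$ lies in $N_1$ or in $N_2$, so $S = S_1 \cup S_2$, and since $N_1 \cap N_2 \neq \emptyset$ the two connected graphs $S_1$ and $S_2$ share a vertex; hence their union $S$ is connected.

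As $S$ was arbitrary, every meta-switching of $N_1 \cup N_2$ is acyclic and connected, so $N_1 \cup N_2$ is an N-Graph. The step I expect to be the main obstacle is the bookkeeping around the switchable and meta-edge links: one must verify precisely that a sub-N-Graph cannot contain a split link, so that a single meta-switching of the union simultaneously restricts to meta-switchings of $N_1$ and $N_2$ (needed for connectivity) and extends to one of $N$ without disturbing the retained edges (needed for acyclicity). The contraction and expansion links and the $\rightarrow$-introduction meta-edge each deserve a separate check that their switching choice behaves coherently under both restriction and extension.
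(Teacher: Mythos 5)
Your proposal is correct and follows essentially the same route as the paper: acyclicity is inherited from the ambient N-Graph $N$, and connectedness of any meta-switching of $N_1 \cup N_2$ is obtained by restricting it to meta-switchings of $N_1$ and $N_2$, which are connected and share a vertex of the nonempty intersection. The only difference is that you make explicit the bookkeeping the paper leaves implicit (that links are never split across a sub-N-Graph, so restrictions and extensions of meta-switchings behave coherently), which is a welcome but not substantively different addition.
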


\begin{proof}
Once {\small $N$} is a N-Graph, their meta-switchings do not have cycles and so any subgraph of {\small $N$}
may not have a cyclic meta-switching.
Then we must prove only the connectedness of all meta-switchings {\small $S$} associated with {\small $N_1 \cup N_2$}.
If {\small $N_1 \cap N_2 = \emptyset$}, then any meta-switching associated with {\small $N_1 \cup N_2$} is not connected.
Now let {\small $A \in N_1, ~B \in N_2, ~C \in N_1 \cap N_2$} and {\small $S(N_i)$} be the restriction of 
{\small $S(N_1 \cup N_2)$} to {\small $S(N_i)$}. Since {\small $N_1$} is a N-Graph,
there is a path between {\small $A$} and {\small $C$} in {\small $S(N_1)$}. For the same reason, there is a path between
{\small $B$} and {\small $C$} in {\small $S(N_2)$}. Thus there is a path between {\small $A$} and {\small $B$} in
{\small $S(N_1 \cap N_2)$} once {\small $C \in N_1 \cap N_2$}.
\qed
\end{proof}

\begin{Lem}[Intersection \cite{bellin:1995}]\label{lem:intersection}
If {\small $N_1 \cap N_2 \neq \emptyset$}, then {\small $N_1 \cap N_2$} is a N-Graph.
\end{Lem}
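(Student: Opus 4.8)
The plan is to reduce the statement to a single connectedness claim, because acyclicity is immediate: since $N_1 \cap N_2$ is a subgraph of the N-Graph $N$, every meta-switching of $N_1 \cap N_2$ is a subgraph of an acyclic meta-switching of $N$, hence itself acyclic. What remains is to show that every meta-switching $S$ of $N_1 \cap N_2$ is connected, and here I would lean directly on Lemma~\ref{lem:union}.

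First I would note that the switchable links and the meta-edges lying outside $N_1 \cap N_2$ can be switched freely, so any meta-switching $S$ of $N_1 \cap N_2$ extends to a meta-switching $\sigma$ of $N_1 \cup N_2$ whose restriction to the intersection is exactly $S$. Because $N_1 \cap N_2 \neq \emptyset$, Lemma~\ref{lem:union} makes $N_1 \cup N_2$ an N-Graph, so $\sigma$ is a tree; likewise the restrictions $\sigma_1 = \sigma|_{N_1}$ and $\sigma_2 = \sigma|_{N_2}$ are trees, since $N_1$ and $N_2$ are N-Graphs.

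The main step is then a counting argument. Each switchable link leaves exactly one of its two edges in a meta-switching, and each link carrying a meta-edge contributes exactly one edge (the solid edge or the virtual edge), so the surviving edge sets respect union and intersection. Writing $E(-)$ for the number of edges surviving in the relevant switching, inclusion--exclusion gives, for the vertices, $V(N_1 \cup N_2) = V(N_1) + V(N_2) - V(N_1 \cap N_2)$, and for the edges, $E(\sigma) = E(\sigma_1) + E(\sigma_2) - E(S)$. Every switching here is a forest, for which the number of connected components is $V - E$; applying this to $\sigma$, $\sigma_1$, $\sigma_2$ (each a tree, so $V - E = 1$) and combining the two identities yields $1 = 1 + 1 - c(S)$, whence $c(S) = 1$. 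Thus $S$ is connected, and since $S$ was arbitrary, $N_1 \cap N_2$ is an N-Graph.

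The step I expect to require the most care is justifying $E(\sigma_1) \cap E(\sigma_2) = E(S)$: one must check that a meta-edge link occurring in $N_1 \cap N_2$ contributes the \emph{same} chosen edge (solid or virtual) to both $\sigma_1$ and $\sigma_2$, which holds precisely because $\sigma_1$ and $\sigma_2$ are restrictions of the one global switching $\sigma$. A more geometric alternative, closer to the proof of Lemma~\ref{lem:union}, would assume $S$ disconnected, pick vertices $a, b$ of $N_1 \cap N_2$ in different components, and use the trees $\sigma_1(N_1)$ and $\sigma_2(N_2)$ to produce two $a$--$b$ paths, one necessarily using an edge of $N_1 \setminus N_2$ and the other an edge of $N_2 \setminus N_1$; being distinct, their union would contain a cycle in the tree $\sigma(N_1 \cup N_2)$, a contradiction.
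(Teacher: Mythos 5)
Your proposal is correct, but your main argument is a genuinely different route from the paper's. The paper also reduces to connectedness and also extends the given meta-switching $S$ of $N_1 \cap N_2$ to switchings $S_1, S_2$ of $N_1, N_2$; but it then argues via path uniqueness: for any two vertices $A,B$ of the intersection it takes the $A$--$B$ paths $\pi_1$ in $S_1(N_1)$ and $\pi_2$ in $S_2(N_2)$, observes that $\pi_1 \neq \pi_2$ would create a cycle in the glued switching $S_1 \cup S_2$ of $N_1 \cup N_2$ (impossible, since every meta-switching of a subgraph of $N$ is acyclic), and concludes $\pi_1 = \pi_2$, so the common path lies in $S(N_1 \cap N_2)$ --- this is exactly the ``geometric alternative'' you sketch at the end, except that the paper needs only acyclicity of the union's switching and so does not invoke Lemma~\ref{lem:union} at all. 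Your primary argument is instead an Euler-characteristic count ($c = V - E$ for forests, plus inclusion--exclusion on vertices and on surviving edges), which does require Lemma~\ref{lem:union} to know that $\sigma(N_1 \cup N_2)$ is a tree rather than merely a forest; that dependence is harmless since the Union lemma is established first. What the counting buys you is a proof that treats all switchable and meta-edge links uniformly in one identity, at the price of the bookkeeping you rightly flag --- that $E(\sigma_1) \cap E(\sigma_2) = E(S)$, including for virtual edges, which holds because $\sigma_1$ and $\sigma_2$ are restrictions of the single global $\sigma$ extending $S$. What the paper's path argument buys is fewer prerequisites and no edge counting, at the price of having to consider the pair of paths explicitly. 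Both arguments share the same mild gloss (also present in the paper) that the intersection of two sub-proof-graphs is again built out of whole links, so that ``meta-switching of $N_1 \cap N_2$'' makes sense.
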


\begin{proof}
As in the previous lemma, it is sufficient to prove the connectivity of {\small $N_1 \cap N_2$}. 
Since {\small $N_1 \cap N_2 \neq \emptyset$}, let {\small $A \in N_1 \cap N_2$}.
If {\small $A$} is the only vertex present in {\small $N_1 \cap N_2$}, then it is connected and so is a N-Graph (axiom).
Otherwise, let {\small $B$} be any other vertex in {\small $N_1 \cap N_2$}, {\small $S$} be a meta-switching of 
{\small $N_1 \cap N_2$} and {\small $S_i$} be an extension of {\small $S$} for {\small $N_i$}. Once {\small $N_1$} and
{\small $N_2$} are sub-N-Graphs, there are a path {\small $\pi_1$} between {\small $A$} and {\small $B$} in 
{\small $S(N_1)$} and a path {\small $\pi_2$} in {\small $S(N_2)$}. If {\small $\pi_1 \neq \pi_2$}, then 
{\small $S_{12} = S_1 \cup S_2$} is a meta-switching for {\small $N_1 \cup N_2$} and {\small $S_{12}(N_1 \cup N_2)$}
has a cycle. So {\small $\pi_1 = \pi_2$} and {\small $A$} and {\small $B$} are connected.
\qed
\end{proof}

\begin{Def}[North, south and whole empires]
Let {\small $A$} be a formula occurrence in a N-Graph {\small $N$}. The \emph{north} (\emph{south}) 
\emph{empire of {\small $A$}}, represented by {\small $eA^{\wedge}$} ({\small $eA_{\vee}$}) 
is the largest sub-N-Graph of {\small $N$} having {\small $A$} as a \emph{lower} (\emph{upper}) door.
The \emph{whole empire of {\small $A$}} ({\small $wA$}) is the union of {\small $eA^{\wedge}$} and {\small $eA_{\vee}$}.
\end{Def}

If we prove that {\small $eA^{\wedge}$} and {\small $eA_{\vee}$} exist, then it is immediate the existence of {\small $wA$}
by lemma \ref{lem:union}. In the following section we give two equivalent constructions of empires and prove some properties.


\section{North and south empires}\label{sec:empires}

\subsection{Constructions and existence}\label{sub:empires_construction}

\begin{Def}[{\small $S^{\wedge}(N,A)$} and {\small $S_{\vee}(N,A)$}]
Let {\small $A$} be a formula occurrence in a N-Graph {\small $N$} and {\small $S$} an associated meta-switching
of {\small $N$}. If {\small $A$} is a premise of a link with a conclusion {\small $A'$} and the edge
{\small $(A,A')$} belongs to {\small $S(N)$}, then remove this edge and {\small $S^{\wedge}(N,A)$} is
the component that contains {\small $A$} and {\small $\overline{S^{\wedge}(N,A)}$} is the other one
(if {\small $A$} is premise of a disjunctive defocussing link different from {\small $\rightarrow-I$}, then
{\small $\overline{S^{\wedge}(N,A)}$} has two components). 
If {\small $A$} is not premise of any link in {\small $S(N)$}, then 
{\small $S^{\wedge}(N,A)$} is {\small $S(N)$} ({\small $\overline{S^{\wedge}(N,A)}$} is empty).
{\small $S_{\vee}(N,A)$} is defined analogously: if {\small $A$} is a conclusion of a link with a premise {\small $A''$} 
and the edge {\small $(A'',A)$} belongs to {\small $S(N)$}, then remove it and {\small $S_{\vee}(N,A)$} is the component
which has {\small $A$} and {\small $\overline{S_{\vee}(N,A)}$} is the other one (if {\small $A$} is conclusion of a
conjunctive focussing link, then {\small $\overline{S_{\vee}(N,A)}$} has two components). 
If {\small $A$} is not conclusion of any link in {\small $S(N)$}, then {\small $S_{\vee}(N,A)$} is equal to {\small $S(N)$}
({\small $\overline{S_{\vee}(N,A)}$} is empty).
\end{Def}

As the virtual edge added in some meta-switchings connects two conclusions of the {\small $\rightarrow-I$} link, we consider the main conclusion of the link {\small $(A \rightarrow B)$} as the conclusion of this virtual edge for the purpose of deciding if we must remove the edge to construct {\small $S^{\wedge}(N,A)$} or {\small $S_{\vee}(N,A)$}.

\begin{Def}[Principal meta-switching \cite{girard:1987,girard:1991}] \label{def:principal_switching}
Let {\small $A$} be a formula occurrence. We say that a meta-switching {\small $S_p^{\wedge}$}
({\small $S^p_{\vee}$}) is principal for {\small $eA^{\wedge}$} ({\small $eA_{\vee}$})
when it chooses the edges satisfying the following restrictions:
\begin{enumerate}
\item \label{item:principal_A}
\emph{{\small $\frac{A_{p_1} ~ A_{p_2}}{A_c}$} is a contraction link and a premise {\small $A_{p_i}$} is the 
formula occurrence {\small $A$}
({\small $\frac{A_p}{A_{c_1} ~ A_{c_2}}$} is an expansion link and a conclusion {\small $A_{c_i}$} is the formula occurrence 
{\small $A$})}: the meta-switching chooses the edge with {\small $A$}.
\item \label{item:principal_contraction}
\emph{{\small $\frac{X_{p_1} ~ X_{p_2}}{X_c}$} is a contraction link and only one premise {\small $X_{p_i} \neq A$} belongs to
{\small $eA^{\wedge}$} ({\small $eA_{\vee}$})}: the meta-switching links the conclusion with the premise which is not in
{\small $eA^{\wedge}$} ({\small $eA_{\vee}$}).
\item \label{item:principal_expansion}
\emph{{\small $\frac{X_p}{X_{c_1} ~ X_{c_2}}$} is an expansion link and only one conclusion {\small $X_{c_i} \neq A$}
belongs to {\small $eA^{\wedge}$} ({\small $eA_{\vee}$})}: {\small $S_p^{\wedge}$} ({\small $S^p_{\vee}$})
selects the edge which has the conclusion that is not in {\small $eA^{\wedge}$} ({\small $eA_{\vee}$}).
\item \label{item:principal_meta}
\emph{{\small $\frac{Y}{X ~ X \rightarrow Y}$} is a {\small $\rightarrow-I$} link and only {\small $X$} or
only {\small $Y$} belongs to {\small $eA^{\wedge}$} ({\small $eA_{\vee}$})}: {\small $S_p^{\wedge}$} ({\small $S^p_{\vee}$})
selects the edge which connects {\small $X \rightarrow Y$} to the other formula that is not in {\small $eA^{\wedge}$} ({\small $eA_{\vee}$}).
\item \label{item:principal_meta_A}
\emph{{\small $\frac{Y}{X ~ X \rightarrow Y}$}} is a {\small $\rightarrow-I$} link and {\small $X$} or {\small $Y$} is the
formula occurrence {\small $A$}: the meta-switching chooses the edge with {\small $A$}.
\end{enumerate}
\end{Def}

\begin{Lem} \label{lem:empires_constructions}
The north (south) empire of {\small $A$} exists and is given by the two following equivalent conditions:
\begin{enumerate}
\item \label{item:empire_switch}
{\small $\bigcap_{S}S^{\wedge}(N,A)$} ({\small $\bigcap_{S}S_{\vee}(N,A)$}), 
where {\small $S$} ranges over all meta-switchings of {\small $N$};

\item \label{item:empire_link}
the smallest set of formula occurrences of {\small $N$} closed under the following conditions:
\addtolength\leftmargini{0.5em}
\begin{enumerate}

\item \label{item:empires_base}
{\small $A \in eA^{\wedge}$} ({\small $A \in eA_{\vee}$});

\item \label{item:empires_up_simple}
if {\small $\frac{X}{Y}$} is a simple link and {\small $Y \in eA^{\wedge}$}, then {\small $X \in eA^{\wedge}$}
(if {\small $Y \neq A$} and {\small $Y \in eA_{\vee}$}, then {\small $X \in eA_{\vee}$});

\item \label{item:empires_up_conjunctive}
if {\small $\frac{X~Y}{Z}$} is a conjunctive focussing link and {\small $Z \in eA^{\wedge}$}, 
then {\small $X,Y \in eA^{\wedge}$}
(if {\small $Z \neq A$} and {\small $Z \in eA_{\vee}$}, then {\small $X,Y \in eA_{\vee}$});

\item \label{item:empires_up_disjunctive}
if {\small$\frac{~X}{Y~Z}$} is a disjunctive defocussing link different from {\small $\rightarrow-I$} and
{\small $Y \in eA^{\wedge}$} or {\small $Z \in eA^{\wedge}$}, then {\small $X \in eA^{\wedge}$}
(if {\small $Y \neq A \neq Z$} and {\small $Y \in eA_{\vee}$} or {\small $Z \in eA_{\vee}$}, then {\small $X \in eA_{\vee}$});

\item \label{item:empires_up_expansion}
if {\small $\frac{X_p}{X_{c_1} ~ X_{c_2}}$} is an expansion link and {\small $X_{c_1},X_{c_2} \in eA^{\wedge}$}, 
then {\small $X_p \in eA^{\wedge}$}
(if {\small $X_{c_1} \neq A \neq X_{c_2}$} and {\small $X_{c_1},X_{c_2} \in eA_{\vee}$}, then {\small $X_p \in eA_{\vee}$});

\item \label{item:empires_up_contraction}
if {\small $\frac{X_{p_1} ~ X_{p_2}}{X_c}$} is a contraction link and {\small $X_c \in eA^{\wedge}$}, then
{\small $X_{p_1},X_{p_2} \in eA^{\wedge}$}
(if {\small $X_c \neq A$} and {\small $X_c \in eA_{\vee}$}, then {\small $X_{p_1},X_{p_2} \in eA_{\vee}$});

\item \label{item:empires_up_meta}
if {\small $\frac{Y}{X ~ X \rightarrow Y}$} is a {\small $\rightarrow-I$} link and {\small $X \rightarrow Y \in eA^{\wedge}$}, 
then {\small $Y, X \in eA^{\wedge}$}
(if {\small $X \rightarrow Y \neq A$} and {\small $X \rightarrow Y \in eA_{\vee}$}, then {\small $Y, X \in eA_{\vee}$});

\item \label{item:empires_down_simple}
if {\small $\frac{X}{Y}$} is a simple link, {\small $X \neq A$} and {\small $X \in eA^{\wedge}$}, 
then {\small $Y \in eA^{\wedge}$}
(if {\small $X \in eA_{\vee}$}, then {\small $Y \in eA_{\vee}$});

\item \label{item:empires_down_conjunctive}
if {\small $\frac{X~Y}{Z}$} is a conjunctive focussing link, {\small $X \neq A \neq Y$} and {\small $X \in eA^{\wedge}$} 
or {\small $Y \in eA^{\wedge}$}, then {\small $Z \in eA^{\wedge}$}
(if {\small $X \in eA_{\vee}$} or {\small $Y \in eA_{\vee}$}, then {\small $Z \in eA_{\vee}$});

\item \label{item:empires_down_disjunctive}
if {\small $\frac{~X}{Y~Z}$} is a disjunctive defocussing link different from {\small $\rightarrow-I$},
{\small $X \neq A$} and {\small $X \in eA^{\wedge}$}, 
then {\small $Y,Z \in eA^{\wedge}$}
(if {\small $X \in eA_{\vee}$}, then {\small $Y,Z \in eA_{\vee}$});

\item \label{item:empires_down_expansion}
if {\small $\frac{X_p}{X_{c_1} ~ X_{c_2}}$} is an expansion link, {\small $X_p \neq A$} and {\small $X_p \in eA^{\wedge}$}, 
then {\small $X_{c_1},X_{c_2} \in eA^{\wedge}$}
(if {\small $X_p \in eA_{\vee}$}, then {\small $X_{c_1},X_{c_2} \in eA_{\vee}$});

\item \label{item:empires_down_contraction}
if {\small $\frac{X_{p_1} ~ X_{p_2}}{X_c}$} is a contraction link, {\small $X_{p_1} \neq A \neq X_{p_2}$} and 
{\small $X_{p_1},X_{p_2} \in eA^{\wedge}$}, then {\small $X_c \in eA^{\wedge}$}
(if {\small $X_{p_1},X_{p_2} \in eA_{\vee}$}, then {\small $X_c \in eA_{\vee}$});

\item \label{item:empires_down_meta}
if {\small $\frac{Y}{X ~ X \rightarrow Y}$} is a {\small $\rightarrow-I$} link, {\small $Y \neq A \neq X$} and {\small $X, Y \in eA^{\wedge}$}, 
then {\small $X \rightarrow Y \in eA^{\wedge}$}
(if {\small $X, Y \in eA_{\vee}$}, then {\small $X \rightarrow Y \in eA_{\vee}$}).

\end{enumerate}

\end{enumerate}

\end{Lem}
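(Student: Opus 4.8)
The plan is to handle the three objects at once: write $L$ for the inductively defined set of item~\ref{item:empire_link}, $I$ for the intersection $\bigcap_{S} S^{\wedge}(N,A)$ of item~\ref{item:empire_switch}, and let $\mathcal{E}$ denote the collection of all sub-N-Graphs of $N$ that have $A$ as a lower door. I will prove that $eA^{\wedge}$ exists and that $eA^{\wedge}=L=I$, treating only the north case in detail since the south case is symmetric (exchange premises with conclusions, upward rules with downward rules, and $S^{\wedge}$ with $S_{\vee}$). Throughout, acyclicity is free: every meta-switching of $N$ is acyclic, hence so is every meta-switching of any subgraph, so only connectedness ever has to be verified.

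First I would show that $I$ is an upper bound for $\mathcal{E}$. Fix $M\in\mathcal{E}$ and a meta-switching $S$ of $N$. Since $A$ is a lower door of $M$, the edge below $A$ that is removed to form $S^{\wedge}(N,A)$ does not lie in $M$; as $M$ is a sub-N-Graph, the restriction of $S$ to $M$ is one of its own meta-switchings and is therefore connected, so every vertex of $M$ remains joined to $A$ after the cut and falls in the $A$-component $S^{\wedge}(N,A)$. Letting $S$ vary gives $M\subseteq I$; thus $I$ bounds every member of $\mathcal{E}$.

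Next I would verify that $L\in\mathcal{E}$, which is where the real work lies. That the induced subgraph on $L$ is a proof-graph is read off from the closure conditions \ref{item:empires_base}--\ref{item:empires_down_meta}: the upward rules pull in the premises of any link whose conclusion lies in $L$, and the downward rules pull in the conclusion of any link all of whose premises other than $A$ lie in $L$, so no link is left straddling the boundary except at the door $A$. The substantive claim is that \emph{every} meta-switching of $L$ is connected. I would prove this by relating switchings of $L$ to those of $N$: given a meta-switching $\sigma$ of $L$, extend it to a meta-switching $S$ of $N$ whose boundary choices are the principal ones of Definition~\ref{def:principal_switching} (reading ``$eA^{\wedge}$'' there as $L$), and show that the $A$-component $S^{\wedge}(N,A)$ has exactly $L$ as its vertex set. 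The inclusion $L\subseteq S^{\wedge}(N,A)$ comes by induction on the closure rules, each switchable or meta link being matched by a principal choice that keeps a surviving edge attaching the newly added vertex to $A$; the inclusion $S^{\wedge}(N,A)\subseteq L$ comes because conditions \ref{item:principal_contraction}--\ref{item:principal_meta} cut precisely the edges that would join a vertex of $L$ to a vertex outside $L$, sealing the component so that nothing leaks in. Identifying $L$ with the $A$-component of the connected tree $S(N)$ minus one edge then yields connectedness of $L$ under $\sigma$, and Lemmas~\ref{lem:union} and~\ref{lem:intersection} guarantee that the pieces behave well under union and intersection.

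It then remains only to close the loop. From $L\in\mathcal{E}$ and the first step, $L\subseteq I$; conversely $I\subseteq S_p^{\wedge}(N,A)=L$ for the principal switching, so $I=L$. Since $I$ bounds every member of $\mathcal{E}$ and $L=I$ is itself a member, $L$ is the maximum of $\mathcal{E}$, i.e.\ the largest sub-N-Graph with $A$ as a lower door; this gives existence and both characterizations simultaneously. I expect the main obstacle to be exactly the connectedness of $L$'s meta-switchings, and in particular the treatment of the $\rightarrow$-$I$ link, which is the genuinely new feature relative to the proof-net arguments of Girard and Robinson. Because its meta-edge forces each switching to commit to either $X$ or $Y$, the component computation must regard the main conclusion $X\rightarrow Y$ as the head of the virtual edge, and I must check that the closure rules \ref{item:empires_up_meta} and \ref{item:empires_down_meta} are coherent with the principal choices \ref{item:principal_meta} and \ref{item:principal_meta_A}, so that discharging a hypothesis can never let material outside $L$ leak into the $A$-component nor break a needed connection inside it. Verifying this coherence, and that it leaves the extension-and-sealing argument intact, is where the bulk of the case analysis will go.
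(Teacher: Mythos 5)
Your overall architecture (every sub-N-Graph with $A$ as a lower door lies in $I=\bigcap_S S^{\wedge}(N,A)$; the closure set $L$ is itself such a sub-N-Graph; $I\subseteq S_p^{\wedge}(N,A)\subseteq L$) is attractive, but the middle step hides a circularity. To show that $L$ is a sub-N-Graph with $A$ as a lower door you need the sealing property: in the extended switching, no edge joins $L$ to its complement except the edge below $A$. The principal choices of Definition~\ref{def:principal_switching} seal only the switchable and meta links; for unswitchable links the closure conditions seal the boundary \emph{except} at the link below $A$ itself, because the downward rules all carry the side condition $X\neq A$. Concretely, if the link below $A$ is $\frac{A~Y}{Z}$, nothing in conditions \ref{item:empires_base}--\ref{item:empires_down_meta} by itself prevents $Y$ (or $Z$) from being forced into $L$ by some other chain of rule applications; then the unswitchable edge $(Y,Z)$ crosses the boundary, $A$ fails to be a lower door, and your component computation breaks. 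The only way to exclude this is to know $L\subseteq I$ already (since $Y,Z\notin S^{\wedge}(N,A)$ for every $S$, by the tree structure of $S(N)$ minus the edge below $A$) --- but in your plan $L\subseteq I$ is a \emph{consequence} of $L\in\mathcal{E}$, which is exactly what you are trying to establish.

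The missing ingredient is the paper's first direction: prove directly that $I$ is closed under conditions \ref{item:empires_base}--\ref{item:empires_down_meta}, so that $L\subseteq I$ is available before any claim about $L$ being an N-Graph is made. The easy closure cases follow from the imperialistic observation (an edge present in every switching and distinct from $(A,A')$ cannot leave the $A$-component), but conditions \ref{item:empires_up_contraction}, \ref{item:empires_down_expansion} and \ref{item:empires_up_meta} require a switch-flipping argument: if $X_c\in I$ but some premise $X_{p_1}\notin S^{\wedge}(N,A)$, replacing $S$ by the switching $S'$ that selects $(X_{p_1},X_c)$ produces a cycle through $A$ and $A'$, contradicting soundness of $N$. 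Your sketch never uses acyclicity in this essential way (you invoke it only to say that subgraphs inherit acyclicity), and your stated mechanism for $L\subseteq S^{\wedge}(N,A)$ --- each switchable link ``matched by a principal choice that keeps a surviving edge'' --- fails precisely for contraction, expansion and $\rightarrow$-$I$ links lying entirely inside $L$, where the switching is the arbitrary $\sigma$ and in any case discards one of the two edges, so the detached premise (or conclusion, or hypothesis) is connected to $A$ only by virtue of that cycle argument. Once the direction $L\subseteq I$ is in place, your remaining steps (the upper-bound argument for $\mathcal{E}$, and $I\subseteq S_p^{\wedge}(N,A)\subseteq L$) do go through, and they additionally recover the maximality statement that the paper defers to Lemma~\ref{lem:largest_empires}.
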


\begin{proof}
We will prove the case for {\small $eA^\wedge$} according to \cite{bellin:1995} (the case for {\small $eA_{\vee}$} is similar)

\begin{enumerate}[I]
\item \ref{item:empire_link} {\small $ \subseteq $} \ref{item:empire_switch}:
we show that \ref{item:empire_switch} is closed under conditions defining \ref{item:empire_link}. Its immediate
{\small $A \in \bigcap_{S}S^{\wedge}(N,A)$}
({\small $S^{\wedge}(N,A)$} contains {\small $A$} for every meta-switching {\small $S$}).
If {\small $B_1 \in \bigcap_{S}S^{\wedge}(N,A)$} and in all meta-switchings there is an edge {\small $(B_1,B_2)$}, then we 
conclude {\small $B_2 \in \bigcap_{S}S^{\wedge}(N,A)$} (imperialistic lemma \cite{girard:1991}). So the construction is also
closed under conditions \ref{item:empires_up_simple}, \ref{item:empires_up_conjunctive}, \ref{item:empires_up_disjunctive},
\ref{item:empires_down_simple}, \ref{item:empires_down_conjunctive} and \ref{item:empires_down_disjunctive}. 
Conditions \ref{item:empires_up_expansion}, \ref{item:empires_down_contraction} and \ref{item:empires_down_meta}
are also simple.
Now suppose that \ref{item:empire_switch} does not respect \ref{item:empires_up_contraction}.
Then there is a contraction link {\small $\frac{X_{p_1} ~ X_{p_2}}{X_c}$} such that 
{\small $X_c \in \bigcap_{S}S^{\wedge}(N,A)$}, but {\small $X_{p_i} \not\in \bigcap_{S}S^{\wedge}(N,A)$}, 
for {\small $i=1$} or {\small $i=2$}. Consider the first one: {\small $X_{p_1} \not\in S^{\wedge}(N,A)$} for some 
{\small $S$}. Since {\small $X_c \in S^{\wedge}(N,A)$}, then {\small $(X_{p_2},X_p) \in S(N,A)$} and so 
{\small $X_{p_2} \in S^{\wedge}(N,A)$}. Once {\small $\overline{S^{\wedge}(N,A)}$} is not empty, {\small $A$} must be 
premise of a link whose one conclusion is {\small $A'$} and {\small $A' \in \overline{S^{\wedge}(N,A)}$}. Let 
{\small $\pi$} be the path between {\small $X_{p_1}$} and {\small $A'$} in {\small $\overline{S^{\wedge}(N,A)}$}. Since
{\small $(X_{p_1}, X_c) \not\in S(N)$}, this edge does not belong to {\small $\pi$} (Fig. \ref{fi:empireEmpireComponent}).
Consider now a switch {\small $S'$} like
{\small $S$}, except that {\small $(X_{p_1}, X_c) \in S'(N)$}. Note that {\small $\pi$} is in {\small $S'(N)$} too and
{\small $X_c \in S'^{\wedge}(N,A)$} (because {\small $X_c \in \bigcap_{S}S^{\wedge}(N,A)$}). Then we may extend {\small $\pi$}
and get a path between {\small $A'$} and {\small $A$} without the edge {\small $(A,A')$} in {\small $S'(N)$}: we obtain
a cycle in {\small $S'(N)$}, which is a contradiction. Therefore \ref{item:empire_switch} is closed under
\ref{item:empires_up_contraction}. For similar reason, we conclude that \ref{item:empire_switch} is closed under 
\ref{item:empires_down_expansion} and \ref{item:empires_up_meta} too.

\begin{figure}[ht]
\centering
\includegraphics[width=150pt]{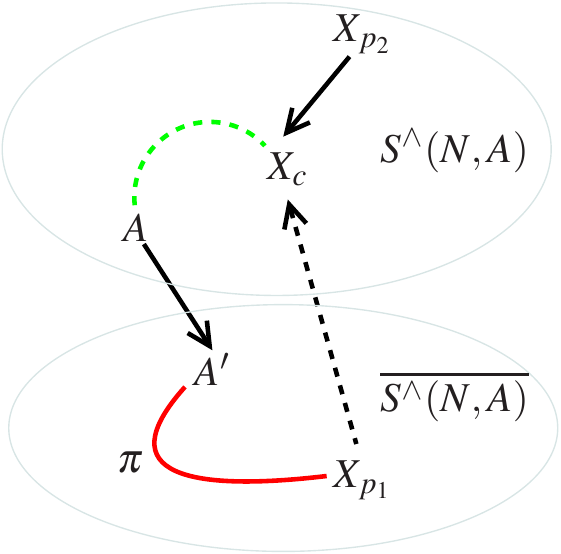}
\caption{If we choose the edge {\small $(X_{p_1},X_c)$}, we get a cycle.}
\label{fi:empireEmpireComponent}
\end{figure}

\item \ref{item:empire_switch} {\small $ \subseteq $} \ref{item:empire_link}: let {\small $S_p^{\wedge}$} a principal
meta-switching for {\small $eA^{\wedge}$}. We will prove {\small $S_p^{\wedge}(N,A) \subseteq $} \ref{item:empire_link}.
{\small $S_p^{\wedge}(N,A) \cap $} \ref{item:empire_link} {\small $\neq \emptyset$}, because both contain {\small $A$}.
But definition \ref{def:principal_switching} ensures that it is impossible to leave {\small $eA^{\wedge}$}
once we are in {\small $S_p^{\wedge}(N,A)$}.
Since {\small $\bigcap_{S}S^{\wedge}(N,A) \subseteq S_p^{\wedge}(N,A)$}, we conclude that
{\small $\bigcap_{S}S^{\wedge}(N,A) \subseteq eA^{\wedge}$}.

\end{enumerate}
\qed

\end{proof}

\begin{Cor}
{\small $S_p^{\wedge} = eA^{\wedge}$} and {\small $S^p_{\vee} = eA_{\vee}$}.
\end{Cor}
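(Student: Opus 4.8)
The plan is to read the statement as the two identities $S_p^{\wedge}(N,A) = eA^{\wedge}$ and $S^p_{\vee}(N,A) = eA_{\vee}$, where $S_p^{\wedge}$ (resp. $S^p_{\vee}$) is tacitly identified with the component of the principal meta-switching that contains $A$. The claim is therefore that a single well-chosen meta-switching already realizes the empire, and I would obtain it as a squeeze between two inclusions, both of which are essentially discharged inside the proof of Lemma~\ref{lem:empires_constructions}. I would argue the north case in full and remark that the south case is symmetric.

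For the inclusion $eA^{\wedge} \subseteq S_p^{\wedge}(N,A)$ I would simply invoke characterization \ref{item:empire_switch} of the empire: since $eA^{\wedge} = \bigcap_{S} S^{\wedge}(N,A)$ with $S$ ranging over \emph{all} meta-switchings of $N$, and the principal meta-switching $S_p^{\wedge}$ is one particular such $S$, the intersection is trivially contained in its component, so $\bigcap_{S} S^{\wedge}(N,A) \subseteq S_p^{\wedge}(N,A)$, whence $eA^{\wedge} \subseteq S_p^{\wedge}(N,A)$.

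For the reverse inclusion $S_p^{\wedge}(N,A) \subseteq eA^{\wedge}$ I would reuse part~II of the lemma's proof, now stated as a standalone step. Both sets contain $A$, and Definition~\ref{def:principal_switching} is tailored precisely so that no edge of $S_p^{\wedge}(N,A)$ can lead out of the empire. Concretely, I would induct on the distance from $A$ within the component: whenever a vertex already known to lie in $eA^{\wedge}$ is joined in $S_p^{\wedge}(N,A)$ to a neighbour across a link, the principal choice forces that neighbour to match one of the closure clauses \ref{item:empires_up_simple}--\ref{item:empires_down_meta} of characterization \ref{item:empire_link}, so it too belongs to $eA^{\wedge}$; hence the whole component sits inside $eA^{\wedge}$. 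Combining the two inclusions gives $S_p^{\wedge}(N,A) = eA^{\wedge}$, and the dual construction yields $S^p_{\vee}(N,A) = eA_{\vee}$.

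I expect no genuine obstacle, since the technical content is already present in Lemma~\ref{lem:empires_constructions}; the only point requiring care is the bookkeeping at the switchable contraction/expansion links and at the $\rightarrow\!-I$ meta-edge, where one must verify that the principal choices really route the traversal to the neighbour lying outside the empire, so that the upward and downward closure clauses are matched exactly. This is the place where the cycle argument of part~I (Fig.~\ref{fi:empireEmpireComponent}) is implicitly relied upon to guarantee that the complementary component $\overline{S^{\wedge}(N,A)}$ behaves as required.
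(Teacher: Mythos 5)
Your proposal is correct and is essentially the argument the paper intends: the corollary follows immediately from the proof of Lemma~\ref{lem:empires_constructions}, where part~II already establishes both $\bigcap_{S}S^{\wedge}(N,A)\subseteq S_p^{\wedge}(N,A)$ and $S_p^{\wedge}(N,A)\subseteq eA^{\wedge}$, so the chain of inclusions collapses to equality exactly as you describe.
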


\begin{Cor}\label{cor:terminal_empire}
Let {\small $A$} be a premise\footnote{Note that it is not valid if {\small $A$} is a canceled hypothesis.} and {\small $B$} a conclusion. Then {\small $eA_{\vee} = eB^{\wedge} = N$}.
\end{Cor}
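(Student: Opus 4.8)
The plan is to read the result off directly from the maximality built into the definition of the empires, using the fact that the whole N-Graph {\small $N$} is itself an eligible sub-N-Graph. Existence of the empires is already guaranteed by Lemma~\ref{lem:empires_constructions}, so {\small $eA_{\vee}$} and {\small $eB^{\wedge}$} are well defined, and by the Union Lemma (Lemma~\ref{lem:union}) the ``largest'' sub-N-Graph carrying a prescribed door is unambiguous, since any two sub-N-Graphs sharing the door vertex have nonempty intersection and so their union is again a sub-N-Graph.

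First I would unwind the hypotheses through the definitions of {\small $PREMIS$} and {\small $CONC$}. Saying that {\small $A$} is a premise means {\small $A \in PREMIS(N)$}, i.e.\ the (total) indegree of {\small $A$} is zero; dually {\small $B \in CONC(N)$} has outdegree zero. By the sub-N-Graph definition, {\small $A \in PREMIS(N)$} makes {\small $A$} an upper door of {\small $N$} and {\small $B \in CONC(N)$} makes {\small $B$} a lower door of {\small $N$}. Since {\small $N$} is a subgraph of itself and is an N-Graph, it is a sub-N-Graph of {\small $N$}; hence {\small $N$} is a sub-N-Graph having {\small $A$} as an upper door. As {\small $eA_{\vee}$} is by definition the largest sub-N-Graph of {\small $N$} with {\small $A$} as an upper door, and every sub-N-Graph is contained in {\small $N$}, this maximal object can only be {\small $N$} itself, so {\small $eA_{\vee} = N$}. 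The symmetric argument, now with the lower door {\small $B$} and the north empire, gives {\small $eB^{\wedge} = N$}.

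The one point that really needs care --- and the reason for the footnote --- is to verify that a canceled hypothesis must be excluded. If {\small $A \in HYPOT(N)$}, then {\small $A$} has solid indegree zero but meta indegree one, so its total indegree is one and {\small $A \notin PREMIS(N)$}. Then {\small $A$} is not an upper door of {\small $N$}, so {\small $N$} is not even a candidate for {\small $eA_{\vee}$} and the equality fails. I expect this distinction between a genuine premise and a canceled hypothesis, detected through the presence of the meta-edge, to be the only delicate step; once the hypothesis {\small $A \in PREMIS(N)$} is secured, the conclusion is immediate from maximality.
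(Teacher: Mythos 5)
Your argument is correct. The paper gives no explicit proof and presents the statement as a corollary of Lemma~\ref{lem:empires_constructions}, so the implied route is via characterization~\ref{item:empire_switch}: a premise {\small $A$} is not the conclusion of any link, hence {\small $S_{\vee}(N,A)=S(N)$} for \emph{every} meta-switching {\small $S$}, and the intersection over all switchings of a spanning subgraph is all of {\small $N$} (dually for a conclusion {\small $B$} and {\small $S^{\wedge}(N,B)$}). You instead read the result off the definitional maximality: {\small $N$} is itself a sub-N-Graph of {\small $N$} with {\small $A$} as an upper door, every candidate is contained in {\small $N$}, so the largest one is {\small $N$}. This is even more elementary --- you do not need the switching construction at all, and the worry about well-definedness of ``largest'' (which you discharge via Lemma~\ref{lem:union}) is in fact moot here, since {\small $N$} dominates every candidate outright. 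Your treatment of the footnote is also the right one: a canceled hypothesis has meta indegree one, so it lies in {\small $HYPOT(N)$} rather than {\small $PREMIS(N)$}, is not an upper door of {\small $N$}, and the maximality argument (equivalently, the clause {\small $S_{\vee}(N,A)=S(N)$}) simply does not apply to it.
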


\begin{Lem}\label{lem:largest_empires}
{\small $eA^{\wedge}$} and {\small $eA_{\vee}$} are the largest sub-N-Graphs which contains {\small $A$} as a lower and
upper door, respectively.
\end{Lem}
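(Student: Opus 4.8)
The plan is to prove the statement for $eA^{\wedge}$; the argument for $eA_{\vee}$ is dual, exchanging premises with conclusions and the upward closure conditions with the downward ones. Two things must be established: first, that $eA^{\wedge}$ is itself a sub-N-Graph having $A$ as a lower door; and second, that it contains every other sub-N-Graph with this property. The first part is essentially delivered by Lemma \ref{lem:empires_constructions}: the empire exists as a sub-N-Graph, and the switching characterization $eA^{\wedge} = \bigcap_{S}S^{\wedge}(N,A)$ shows that $A$ is a lower door. Indeed, each $S^{\wedge}(N,A)$ is the component containing $A$ after deleting the edge(s) leaving $A$ downward, so no such edge survives in any $S^{\wedge}(N,A)$ and hence none survives in the intersection; thus $A \in CONC(eA^{\wedge})$.

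For maximality I would take an arbitrary sub-N-Graph $M$ of $N$ having $A$ as a lower door and show $M \subseteq eA^{\wedge}$. By the characterization in item \ref{item:empire_switch} of Lemma \ref{lem:empires_constructions}, it suffices to prove $M \subseteq S^{\wedge}(N,A)$ for every meta-switching $S$ of $N$. Fix such an $S$ and let $S|_M$ be its restriction to $M$. Since $M$ is a sub-N-Graph, $S|_M$ is a legitimate meta-switching of $M$: the choices $S$ makes at the switchable links and at the $\rightarrow-I$ links lying wholly inside $M$ restrict to admissible choices of $M$. Because $M$ is an N-Graph, $S|_M(M)$ is acyclic and connected, and every edge it keeps is an edge of $S(N)$.

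The key observation is that none of the edges deleted in passing from $S(N)$ to $S^{\wedge}(N,A)$ belongs to $M$. Those deleted edges are exactly the edges leaving $A$ downward in $S(N)$: a single edge when $A$ is a premise of a focussing or $\rightarrow-I$ link, and two edges when $A$ is the premise of a disjunctive defocussing link, using the convention that treats the main conclusion $A \rightarrow B$ as the conclusion of the virtual edge. Since $A$ is a lower door of $M$, i.e.\ $A \in CONC(M)$, no edge of $M$ leaves $A$ downward, so every deleted edge is absent from $M$. Consequently the connected graph $S|_M(M)$ survives intact inside $S(N)$ after the deletion, and as it contains $A$ it lies entirely within the component of $A$, namely $S^{\wedge}(N,A)$. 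Hence $M \subseteq S^{\wedge}(N,A)$, and intersecting over all $S$ yields $M \subseteq eA^{\wedge}$.

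The main obstacle I anticipate is the careful bookkeeping for the defocussing and $\rightarrow-I$ links: one must verify that ``the edges leaving $A$ downward'' coincide precisely with the edges removed in the construction of $S^{\wedge}(N,A)$ in every case, in particular honoring the virtual-edge convention for $\rightarrow-I$, so that the lower-door hypothesis genuinely guarantees that no deleted edge lies in $M$. Once this case analysis is pinned down, the connectivity argument above applies uniformly across all link types, and the dual reasoning settles $eA_{\vee}$.
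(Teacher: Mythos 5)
Your proof is correct and is essentially the argument the paper itself invokes: the published proof consists of a single citation to Proposition~2 of Bellin \& Van de Wiele, and your restriction-of-meta-switchings argument (a sub-N-Graph $M$ with $A$ as a lower door stays connected in $S|_M(M)$, avoids every edge deleted in forming $S^{\wedge}(N,A)$, and hence lies in the component of $A$) is precisely that proof transported to the N-Graph setting. The one case you flag but leave open --- that when $A$ is the canceled hypothesis of a $\rightarrow-I$ link the deleted edge may be the \emph{virtual} edge $(A, A \rightarrow Y)$, which is not an edge of $M$ and so is not excluded by the bare statement ``no edge of $M$ leaves $A$ downward'' --- closes easily: if that $\rightarrow-I$ link were contained in $M$ while $A \in CONC(M)$, the meta-switching of $M$ that chooses the solid edge $(Y, A \rightarrow Y)$ would leave $A$ isolated, contradicting that $M$ is a sub-N-Graph; hence the link lies outside $M$, the virtual edge never occurs in $S|_M(M)$, and your argument goes through in every case.
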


\begin{proof}
The proof uses the same argument presented in \cite{bellin:1995} (see Proposition 2).
\qed
\end{proof}

Fig. \ref{fi:empire} illustrates some concepts about empires. For example, in the N-Graph on left, we have
{\small $eA^\wedge = \{A, ~A \vee C, ~C, ~C \wedge Z, ~Z, ~\neg{A} \wedge Z \}$} (formulas in green),
and {\small $eA_\vee = \{ A, ~\bot, ~\neg{A}, ~\neg{A} \wedge Z  \}$} (formulas in yellow).
The formula occurrence in red belongs to both empires.
We can see that there is no sub-N-Graph
which contains {\small $A$} as conclusion (premise) and is larger than {\small $eA^\wedge$} ({\small $eA_\vee$}), as
both conclusions of a expansion link are needed to add its premise (condition \ref{item:empires_up_expansion}).
For the second N-Graph we have the same color scheme for {\small $A$}, and here we can not have
the conclusion of the contraction link because we need both premises (condition \ref{item:empires_down_contraction}).

\begin{figure}[ht]
\centering
\includegraphics[width=300pt]{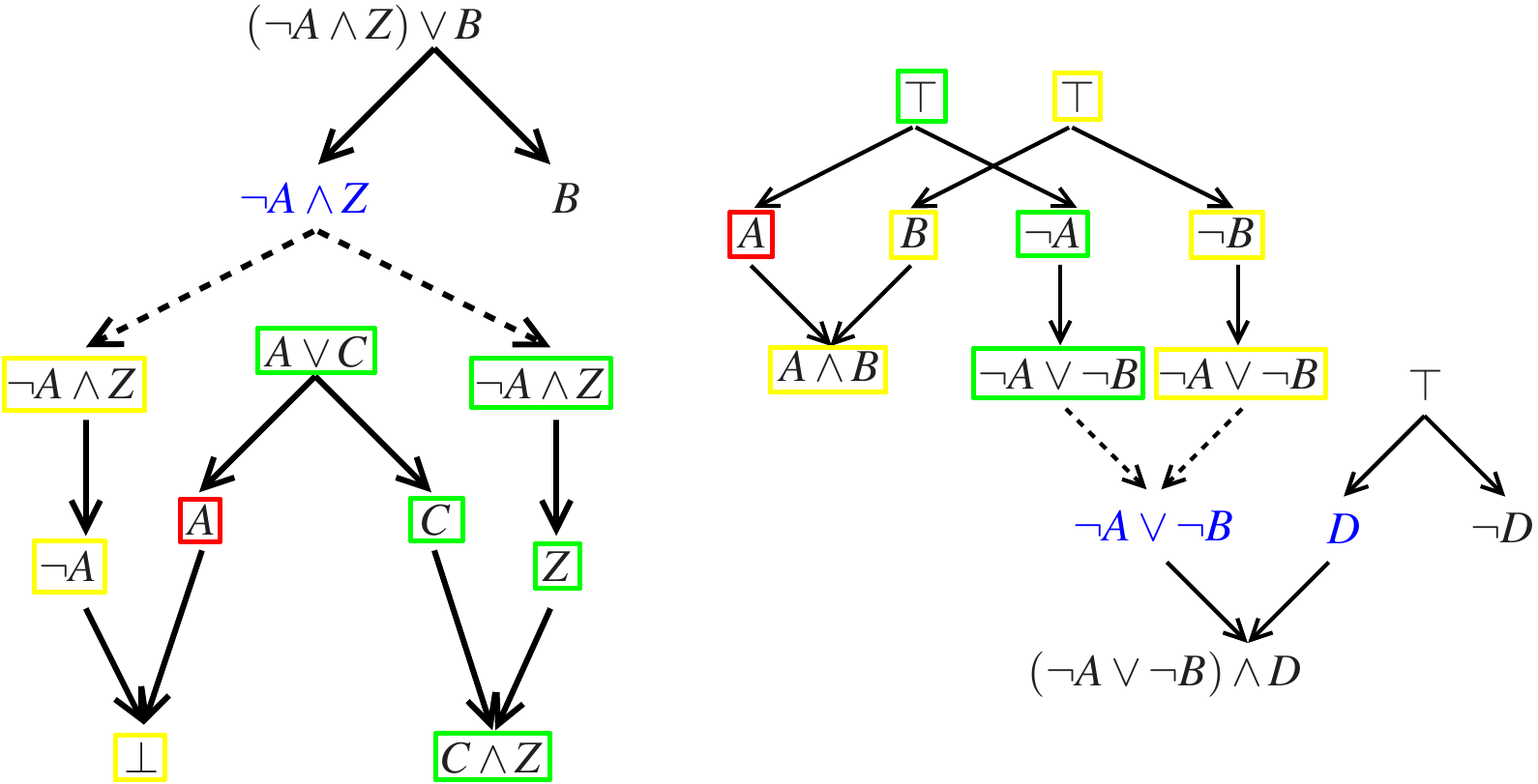}
\caption{N-Graphs for {\small $(\neg{A} \wedge Z) \vee B, ~A \vee C \vdash B, ~C \wedge Z$}
and {\small $ \vdash A \wedge B, ~(\neg{A} \vee \neg{B}) \wedge D, ~\neg{D}$}.}
\label{fi:empire}
\end{figure}

\subsection{Nesting lemmas}\label{sub:empires_nesting}


\begin{Lem}[Nesting of empires I \cite{girard:1991}]\label{lem:nesting_north_contain}
Let {\small $A$} and {\small $B$} be distinct formula occurrences in a N-Graph.
If {\small $A \in eB^{\wedge}$} and {\small $B \not\in eA^{\wedge}$}, then {\small $eA^{\wedge} \subsetneq eB^{\wedge}$}.
\end{Lem}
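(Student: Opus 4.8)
The plan is to deduce the nesting from the \emph{Union} lemma (Lemma~\ref{lem:union}) together with the maximality of north empires (Lemma~\ref{lem:largest_empires}), following the Bellin--Van de Wiele strategy. First I would observe that the two empires overlap: by condition~\ref{item:empires_base} we always have $A \in eA^{\wedge}$, and by hypothesis $A \in eB^{\wedge}$, so $A \in eA^{\wedge} \cap eB^{\wedge}$ and the intersection is non-empty. Lemma~\ref{lem:union} then guarantees that $U := eA^{\wedge} \cup eB^{\wedge}$ is again a sub-N-Graph of $N$.

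The heart of the argument is to show that $B$ is a \emph{lower door} of $U$, i.e. that $B$ is still a conclusion of $U$. Since each node is premise of at most one link, $B$ is a premise of at most one link of $N$; if it is premise of none, then $B$ already has out-degree $0$ in $N$ and is trivially a conclusion of $U$. Otherwise let $\ell$ be the link below $B$, with $B$ among its premises and with conclusion(s) $C$. I would argue that the outgoing edge $(B,C)$ lies in neither component of the union: it is not in $eB^{\wedge}$ because $B$ is a lower door of $eB^{\wedge}$ and hence has out-degree $0$ there; and it is not in $eA^{\wedge}$ because, $eA^{\wedge}$ being a subgraph, an edge can belong to it only if both endpoints do, while $B \notin eA^{\wedge}$ by hypothesis. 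Consequently $B$ keeps out-degree $0$ in $U$, so $B \in CONC(U)$.

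With $B$ a lower door of the sub-N-Graph $U$, the maximality statement of Lemma~\ref{lem:largest_empires} forces $U \subseteq eB^{\wedge}$, and therefore $eA^{\wedge} \subseteq eA^{\wedge}\cup eB^{\wedge} = U \subseteq eB^{\wedge}$. Finally, the inclusion is strict: $B$ is a lower door of $eB^{\wedge}$, so $B \in eB^{\wedge}$, whereas $B \notin eA^{\wedge}$ by hypothesis; thus $eB^{\wedge}$ contains a vertex that $eA^{\wedge}$ does not, giving $eA^{\wedge} \subsetneq eB^{\wedge}$.

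The step I expect to be the main obstacle is making the claim ``$B$ gains no outgoing edge from $eA^{\wedge}$'' airtight across all link shapes. The smooth reading ``an edge is in a subgraph only if both endpoints are'' has to be reconciled with the closure clauses of Lemma~\ref{lem:empires_constructions}, since expansion and $\rightarrow\!-I$ have several conclusions and are governed by the special conditions~\ref{item:empires_up_expansion} and~\ref{item:empires_up_meta}. I would double-check each case by the contrapositive: were the relevant conclusion below $B$ in $eA^{\wedge}$, the matching upward-closure clause (\ref{item:empires_up_simple}, \ref{item:empires_up_conjunctive}, \ref{item:empires_up_disjunctive}, \ref{item:empires_up_contraction}, \ref{item:empires_up_expansion} or~\ref{item:empires_up_meta}, according to the type of $\ell$) would force $B \in eA^{\wedge}$, contradicting the hypothesis; hence that conclusion is outside $eA^{\wedge}$ and no edge out of $B$ is contributed by $eA^{\wedge}$.
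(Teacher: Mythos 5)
Your proof is correct, but it follows a genuinely different route from the paper's. You give the Bellin--Van de Wiele-style ``algebraic'' argument: since $A \in eA^{\wedge} \cap eB^{\wedge}$, Lemma~\ref{lem:union} makes $U = eA^{\wedge} \cup eB^{\wedge}$ a sub-N-Graph; $B$ keeps out-degree zero in $U$ (no out-edge of $B$ lies in $eB^{\wedge}$ because $B$ is a lower door there, and none lies in $eA^{\wedge}$ because $B$ is not even a vertex of $eA^{\wedge}$); so $B$ is a lower door of $U$ and the maximality in Lemma~\ref{lem:largest_empires} gives $U \subseteq eB^{\wedge}$, with strictness witnessed by $B$ itself. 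The key edge-membership observation is already airtight as stated --- a subgraph contains an edge only if it contains both endpoints --- so the final paragraph's link-by-link case analysis is a harmless redundancy; your worry there is unfounded. The paper instead proves Lemmas~\ref{lem:nesting_north_contain} and~\ref{lem:nesting_north_not_contain} together in the Girard style: it builds a principal meta-switching for $eB^{\wedge}$ refined with extra choices at contraction, expansion and $\rightarrow$-$I$ links so that no edge of $S_p^{\wedge}(N,B)$ crosses the boundary of $eA^{\wedge}$ except possibly $(A,A')$, and then reads off both the inclusion (when $A \in eB^{\wedge}$) and the disjointness (when $A \notin eB^{\wedge}$) from that single construction. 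Your approach is shorter and avoids building a bespoke switching, but it relies on Lemma~\ref{lem:largest_empires} in its strong form (the empire \emph{contains} every sub-N-Graph with the given door, not merely has maximal size), which the paper only states by reference to Bellin--Van de Wiele's Proposition~2; and, unlike the paper's switching, it does not simultaneously yield the disjointness statement of Lemma~\ref{lem:nesting_north_not_contain}, which would need a separate argument. There is no circularity: both lemmas you invoke are established before the nesting lemmas and independently of them.
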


\begin{Lem}[Nesting of empires II \cite{girard:1991}]\label{lem:nesting_north_not_contain}
Let {\small $A$} and {\small $B$} be distinct formula occurrences in a N-Graph.
If {\small $A \not\in eB^{\wedge}$} and {\small $B \not\in eA^{\wedge}$}, 
then {\small $eA^{\wedge} \cap eB^{\wedge} = \emptyset$}.
\end{Lem}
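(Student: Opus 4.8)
The plan is to prove Lemma~\ref{lem:nesting_north_not_contain} by contradiction, leveraging the nesting structure already established in Lemma~\ref{lem:nesting_north_contain} together with the union and intersection lemmas for sub-N-Graphs. Suppose, for contradiction, that {\small $A \not\in eB^{\wedge}$} and {\small $B \not\in eA^{\wedge}$}, yet {\small $eA^{\wedge} \cap eB^{\wedge} \neq \emptyset$}. The idea is that a nonempty intersection of two empires forces one of them to contain the distinguished door of the other, which then contradicts one of the two hypotheses.

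First I would invoke Lemma~\ref{lem:intersection}: since {\small $eA^{\wedge} \cap eB^{\wedge} \neq \emptyset$}, the intersection {\small $eA^{\wedge} \cap eB^{\wedge}$} is itself a sub-N-Graph of {\small $N$}. Likewise, by Lemma~\ref{lem:union}, the union {\small $eA^{\wedge} \cup eB^{\wedge}$} is a sub-N-Graph. The key structural fact I would exploit is that {\small $eA^{\wedge}$} is the \emph{largest} sub-N-Graph having {\small $A$} as a lower door (Lemma~\ref{lem:largest_empires}), and symmetrically for {\small $eB^{\wedge}$} with {\small $B$}.

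The main step is to analyse where the door formulas {\small $A$} and {\small $B$} sit relative to the intersection. I would argue that {\small $A$} must lie in {\small $eB^{\wedge}$} or {\small $B$} must lie in {\small $eA^{\wedge}$}: concretely, take a principal meta-switching {\small $S_p^{\wedge}$} for {\small $eB^{\wedge}$} and examine the component structure around a common formula occurrence {\small $C \in eA^{\wedge} \cap eB^{\wedge}$}. Because both empires are connected in their respective principal meta-switchings (Corollary after Lemma~\ref{lem:empires_constructions} gives {\small $S_p^{\wedge} = eA^{\wedge}$}), the path inside {\small $eB^{\wedge}$} from {\small $C$} to {\small $B$} and the path inside {\small $eA^{\wedge}$} from {\small $C$} to {\small $A$} can be combined; the acyclicity and connectedness of meta-switchings of {\small $N$} then force one empire's door to be absorbed into the other. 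This yields either {\small $A \in eB^{\wedge}$} or {\small $B \in eA^{\wedge}$}, directly contradicting the assumption that neither holds.

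The hard part will be ruling out the possibility that the intersection is nonempty while both doors stay outside the opposite empire. Since N-Graphs lack axiom links and permit switchable defocussing links (contraction and expansion with reversed semantics, and the {\small $\rightarrow-I$} link with its meta-edge), the component analysis underlying the closure conditions of Lemma~\ref{lem:empires_constructions} must be carried out carefully at exactly these links: a common vertex could in principle be reached from {\small $A$} through one premise of a contraction and from {\small $B$} through the other, so I must check that the principal-switching conditions~\ref{item:principal_contraction}--\ref{item:principal_meta} are consistent enough to build a genuine cycle or connectivity violation rather than an artefact of a bad switching choice. Once the overlap is shown to propagate a door into the opposite empire by maximality, the contradiction closes the argument.
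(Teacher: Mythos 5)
Your reduction is set up correctly (the lemma is equivalent to: a nonempty intersection forces {\small $A \in eB^{\wedge}$} or {\small $B \in eA^{\wedge}$}), but the step you call the main one is asserted rather than proved, and the mechanism you sketch for it does not work as stated. You propose to take the path from {\small $C$} to {\small $A$} inside {\small $eA^{\wedge}$} and the path from {\small $C$} to {\small $B$} inside {\small $eB^{\wedge}$} and ``combine'' them to violate acyclicity or connectedness. Those two paths are guaranteed to exist only in \emph{different} meta-switchings (a principal one for {\small $eA^{\wedge}$} and a principal one for {\small $eB^{\wedge}$}), and a meta-switching makes one global choice at every contraction, expansion and {\small $\rightarrow$}-I link; there is in general no single meta-switching of {\small $N$} containing both paths, so no cycle or disconnection of an actual switching is produced. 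You flag this yourself as ``the hard part,'' but that hard part is exactly the content of the lemma, and it is left open. The paper resolves it differently: it builds \emph{one} meta-switching that is principal for {\small $eB^{\wedge}$} and additionally, at every switchable link whose conclusion/premise lies in {\small $eB^{\wedge}$} but outside {\small $eA^{\wedge}$}, chooses the edge leading away from {\small $eA^{\wedge}$}. In that single switching no edge of {\small $S_p^{\wedge}(N,B) = eB^{\wedge}$} crosses the boundary of {\small $eA^{\wedge}$} except possibly {\small $(A,A')$}, which is absent because {\small $A \not\in eB^{\wedge}$}; since {\small $eB^{\wedge}$} is connected and contains {\small $B \not\in eA^{\wedge}$}, it is disjoint from {\small $eA^{\wedge}$}.

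It is worth noting that the ingredients you list at the start already give a complete and shorter argument, had you followed them through instead of switching to path surgery. Suppose {\small $eA^{\wedge} \cap eB^{\wedge} \neq \emptyset$}. By Lemma~\ref{lem:union}, {\small $H = eA^{\wedge} \cup eB^{\wedge}$} is a sub-N-Graph. Since {\small $A \not\in eB^{\wedge}$}, the vertex {\small $A$} acquires no outgoing edges from {\small $eB^{\wedge}$}, so {\small $A$} is still a lower door of {\small $H$}; by Lemma~\ref{lem:largest_empires}, {\small $H \subseteq eA^{\wedge}$}, hence {\small $eB^{\wedge} \subseteq eA^{\wedge}$} and {\small $B \in eA^{\wedge}$}, contradicting the hypothesis. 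This is genuinely different from the paper's switching construction (and closer to the Bellin--Van de Wiele style of arguing from maximality), but as submitted your proof does not reach it: the decisive step is missing.
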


\begin{proof}[Lemmas \ref{lem:nesting_north_contain} and \ref{lem:nesting_north_not_contain}]
Construct a principal meta-switching {\small $S_p^{\wedge}$} for {\small $eB^{\wedge}$} with some additional details:

\begin{enumerate}[I]
\item \label{item:nesting1_contraction}
\emph{contraction link whose conclusion belongs to {\small $eB^{\wedge}$}}: if the conclusion is not in 
{\small $eA^{\wedge}$}, then we proceed as we do for a principal meta-switching for {\small $eA^{\wedge}$} (if only one premise
is in {\small $eA^{\wedge}$}, choose the other premise);
\item \label{item:nesting1_expansion}
\emph{expansion link whose premise belongs to {\small $eB^{\wedge}$}}: if the premise is not in {\small $eA^{\wedge}$},
then we proceed as we do for a principal meta-switching for {\small $eA^{\wedge}$} (if only one conclusion is in 
{\small $eA^{\wedge}$}, choose the other conclusion);
\item \label{item:nesting1_meta}
\emph{{\small $\rightarrow-I$} link whose main conclusion belongs to {\small $eB^{\wedge}$}}: if the main conclusion is not in 
{\small $eA^{\wedge}$}, then we proceed as we do for a principal meta-switching for {\small $eA^{\wedge}$} (if only the premise or only the canceled hypothesis
is in {\small $eA^{\wedge}$}, choose the formula which is not in {\small $eA^{\wedge}$});
\item \label{item:nesting1_A}
\emph{if {\small $A$} is a premise of a link whose conclusion {\small $A'$} is in {\small $eB^{\wedge}$}}: then we choose
the edge {\small $(A,A')$}.
\end{enumerate}

First suppose {\small $A \in eB^{\wedge}$}.
We try to go from {\small $A$} to {\small $B$} without passing through {\small $(A,A')$}. Since {\small $S_p^{\wedge}$}
is principal for {\small $eB^{\wedge}$} and {\small $A \in eB^{\wedge}$}, all formulas in the path from {\small $A$} to
{\small $B$} belong to {\small $eB^{\wedge}$}. But {\small $B \not\in eA^{\wedge}$} and sometime we leave 
{\small $eA^{\wedge}$}. By construction \ref{item:empire_link} of lemma \ref{lem:empires_constructions}, there are only three
ways of leaving {\small $eA^{\wedge}$} without passing through {\small $(A,A')$}:
passing through a contraction link whose only one premise belongs to {\small $eA^{\wedge}$},
or passing through an expansion link whose only one conclusion belongs to {\small $eA^{\wedge}$},
or passing through a {\small $\rightarrow-I$} link whose only the premise or the canceled hypothesis is in
{\small $eA^{\wedge}$}; but, steps \ref{item:nesting1_contraction}, \ref{item:nesting1_expansion}
and \ref{item:nesting1_meta} avoid this cases, respectively.

Therefore it is impossible to leave {\small $eA^{\wedge}$} in {\small $S_p^{\wedge}(N,B)$}, unless 
{\small $(A,A') \in S_p^{\wedge}(N,B)$}. This implies {\small $S_p^\wedge(N,A) \subsetneq S_p^{\wedge}(N,B)$}.
Since {\small $eA^{\wedge} \subset S_p^\wedge(N,A)$} and {\small $eB^{\wedge} = S_p^{\wedge}(N,B)$}, we conclude
{\small $eA^{\wedge} \subsetneq eB^{\wedge}$}.

Now suppose {\small $A \not\in eB^{\wedge}$}.
\ref{item:nesting1_contraction}, \ref{item:nesting1_expansion} and \ref{item:nesting1_meta}
ensure we do not have any edges between {\small $eA^{\wedge}$} and 
{\small $\overline{eA^{\wedge}}$}\footnote{{\small $\overline{eA^{\wedge}}$} represents the set of all formula occurrences
which are not in {\small $eA^{\wedge}$}} in {\small $eB^{\wedge}$}, except perhaps for {\small $(A,A')$}.
But now {\small $A \not\in eB^{\wedge}$} and therefore
{\small $A \not\in S_p^{\wedge}(N,B)$}. So {\small $(A,A') \not\in S_p^{\wedge}(N,B)$}. Since 
{\small $eB^{\wedge} = S_p^{\wedge}$} and {\small $B \not\in eA^{\wedge}$}, no formula of {\small $eA^{\wedge}$} belongs
to {\small $eB^{\wedge}$} and thus {\small $eA^{\wedge} \cap eB^{\wedge} = \emptyset$}.
\qed
\end{proof}

From these two previous lemmas we have nesting lemmas \ref{lem:nesting_south_contain} and
 \ref{lem:nesting_south_not_contain} for south empires too (the proofs are similar to the previous ones) and from
these four nesting lemmas, it is possible to proof nesting lemmas between north and south
(\ref{lem:nesting_north_contain_south}, \ref{lem:nesting_south_contain_north} and \ref{lem:nesting_north_not_south}).

\begin{Lem}[Nesting of empires III \cite{girard:1991}]\label{lem:nesting_south_contain}
Let {\small $A$} and {\small $B$} be distinct formula occurrences in a N-Graph.
If {\small $A \in eB_{\vee}$} and {\small $B \not\in eA_{\vee}$},
then {\small $eA_{\vee} \subsetneq eB_{\vee}$}.
\end{Lem}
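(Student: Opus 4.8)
The plan is to mirror the proof of Lemmas~\ref{lem:nesting_north_contain} and \ref{lem:nesting_north_not_contain}, replacing the outgoing splitting edge of the north empire by the incoming splitting edge of the south empire, and proving Lemma~\ref{lem:nesting_south_contain} together with Lemma~\ref{lem:nesting_south_not_contain} from a single switching. Concretely, I would build a principal meta-switching $S^p_{\vee}$ for $eB_{\vee}$ and then impose four extra choices, dual to steps I--IV of the north argument, at each switchable link whose relevant door already lies in $eB_{\vee}$ (so the choice is free for the principality of $eB_{\vee}$): (i) at a contraction whose conclusion lies in $eB_{\vee}$ --- hence both premises lie in $eB_{\vee}$ by condition \ref{item:empires_up_contraction} --- if that conclusion is not in $eA_{\vee}$, choose the premise not in $eA_{\vee}$; (ii) at an expansion whose premise lies in $eB_{\vee}$ --- hence both conclusions lie in $eB_{\vee}$ by condition \ref{item:empires_down_expansion} --- if that premise is not in $eA_{\vee}$, choose the conclusion not in $eA_{\vee}$; (iii) at a $\rightarrow-I$ link whose main conclusion lies in $eB_{\vee}$, if that main conclusion is not in $eA_{\vee}$, choose the edge to whichever of $X$, $Y$ is not in $eA_{\vee}$, reading the virtual edge as terminating at $X \rightarrow Y$; and (iv) if $A$ is the conclusion of a link whose premise $A''$ lies in $eB_{\vee}$, select the splitting edge $(A'',A)$. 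The purpose of (i)--(iii) is that, for the south empire, the closure conditions of Lemma~\ref{lem:empires_constructions} leave exactly the same three switchable configurations --- a contraction with a single premise inside, an expansion with a single conclusion inside, and a $\rightarrow-I$ with only one of $X,Y$ inside --- as the only places where a switching path may cross the boundary of $eA_{\vee}$, and these choices forbid such a crossing.

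With this switching fixed the argument runs exactly as in the north case. For the strict inclusion, assume $A \in eB_{\vee}$; since $S^p_{\vee}$ is principal for $eB_{\vee}$ and $A,B \in eB_{\vee}$, the (unique) $S^p_{\vee}$-path from $A$ to $B$ stays inside $eB_{\vee}$, yet it must leave $eA_{\vee}$ because $B \not\in eA_{\vee}$. By (i)--(iii) the only admissible boundary crossing is the splitting edge, so $(A'',A) \in S^p_{\vee}(N,B)$; hence $B$ lies on the $A''$-side of $(A'',A)$, giving $S^p_{\vee}(N,A) \subsetneq S^p_{\vee}(N,B)$, and since $eA_{\vee} \subseteq S^p_{\vee}(N,A)$ while $eB_{\vee} = S^p_{\vee}(N,B)$, we obtain $eA_{\vee} \subsetneq eB_{\vee}$. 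The companion disjointness (Lemma~\ref{lem:nesting_south_not_contain}) falls out of the same switching: (i)--(iv) leave no boundary edge between $eA_{\vee}$ and its complement inside $eB_{\vee}$ other than possibly $(A'',A)$, and $A \not\in eB_{\vee}$ forces $(A'',A) \not\in S^p_{\vee}(N,B)$, so the two empires are disconnected and $eA_{\vee} \cap eB_{\vee} = \emptyset$.

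The routine part of this is the transcription of the three switchable configurations, which goes through because contraction, expansion and $\rightarrow-I$ satisfy the same ``both-doors-in'' equivalences for $eA_{\vee}$ as for $eA^{\wedge}$ (conditions \ref{item:empires_up_contraction}, \ref{item:empires_down_expansion}, \ref{item:empires_up_meta} and \ref{item:empires_down_meta}). The step I expect to demand the most care is checking that relocating the barrier --- from $A$'s outgoing edge $(A,A')$ in the north argument to its incoming edge $(A'',A)$ here --- still makes the splitting edge the \emph{unique} crossing, and in particular that the $\rightarrow-I$ link opens no new leak. Because $\rightarrow-I$ is the defocussing link carrying the meta/virtual edge, I would verify that, under the convention that the virtual edge terminates at $X \rightarrow Y$, the equivalence $X \rightarrow Y \in eA_{\vee} \iff X,Y \in eA_{\vee}$ (from conditions \ref{item:empires_up_meta} and \ref{item:empires_down_meta}) collapses every $S^p_{\vee}$-path crossing that link to the single configuration handled by choice (iii); this is precisely the place where the global discharge rule could, in principle, spoil the dualization, and confirming it does not is the crux of the proof.
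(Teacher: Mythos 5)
Your proposal is correct and is essentially the paper's own argument: the paper proves Lemmas~\ref{lem:nesting_south_contain} and \ref{lem:nesting_south_not_contain} only by remarking that they follow by the same method as Lemmas~\ref{lem:nesting_north_contain} and \ref{lem:nesting_north_not_contain}, and your dualization --- a principal meta-switching for $eB_{\vee}$ augmented with the four extra choices, with the splitting edge relocated to the incoming edge $(A'',A)$ and the same three switchable boundary configurations (contraction, expansion, $\rightarrow$-$I$) blocked --- is exactly the intended transcription.
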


\begin{Lem}[Nesting of empires IV \cite{girard:1991}]\label{lem:nesting_south_not_contain}
Let {\small $A$} and {\small $B$} be distinct formula occurrences in a N-Graph.
If {\small $A \not\in eB_{\vee}$} and {\small $B \not\in eA_{\vee}$}, then {\small $eA_{\vee} \cap eB_{\vee} = \emptyset$}.
\end{Lem}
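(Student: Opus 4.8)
The plan is to dualize the proof of Lemma~\ref{lem:nesting_north_not_contain}, replacing each north empire $eX^{\wedge}$ by the corresponding south empire $eX_{\vee}$ and systematically exchanging the roles of premises and conclusions (hence of edge directions) throughout. Concretely, I would build a \emph{principal meta-switching} $S^p_{\vee}$ for $eB_{\vee}$ in the sense of Definition~\ref{def:principal_switching}, refined with the following extra instructions, dual to the four items of that proof: (i) at a contraction link whose conclusion lies in $eB_{\vee}$ but not in $eA_{\vee}$, proceed as for a principal meta-switching of $eA_{\vee}$, selecting the premise that is not in $eA_{\vee}$ whenever only one premise is in $eA_{\vee}$; (ii) at an expansion link whose premise lies in $eB_{\vee}$ but not in $eA_{\vee}$, select the conclusion that is not in $eA_{\vee}$; (iii) at a $\rightarrow-I$ link whose main conclusion lies in $eB_{\vee}$ but not in $eA_{\vee}$, choose the door ($X$ or $Y$) not in $eA_{\vee}$; and (iv) if $A$ is the conclusion of a link whose premise $A''$ lies in $eB_{\vee}$, choose the distinguished edge $(A'',A)$. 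Because $S^p_{\vee} = eB_{\vee}$ (corollary to Lemma~\ref{lem:empires_constructions}), this switching realizes $eB_{\vee}$ exactly.

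The core of the argument is the same observation as in the north case, now read downward. By construction~\ref{item:empire_link} of Lemma~\ref{lem:empires_constructions}, the only links at which a meta-switching can step out of $eA_{\vee}$ into its complement are the switchable and meta links: a contraction with only one premise in $eA_{\vee}$, an expansion with only one conclusion in $eA_{\vee}$, and a $\rightarrow-I$ link with only its premise or only its canceled hypothesis in $eA_{\vee}$. Rules (i)--(iii) forbid precisely these crossings, so $S^p_{\vee}(N,B)$ contains no edge joining a formula of $eA_{\vee}$ to one outside $eA_{\vee}$, except possibly the distinguished edge $(A'',A)$ incident to $A$. Invoking the hypotheses: since $A \notin eB_{\vee}$ we have $A \notin S^p_{\vee}(N,B)$, so $(A'',A) \notin S^p_{\vee}(N,B)$ as well, removing even that exception. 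Together with $B \notin eA_{\vee}$, it follows that $eB_{\vee} = S^p_{\vee}(N,B)$ reaches no vertex of $eA_{\vee}$, whence $eA_{\vee} \cap eB_{\vee} = \emptyset$.

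The main obstacle is not the topology of the argument, which transfers verbatim, but the faithful dualization of the switchable and meta links. The $\rightarrow-I$ link is the delicate point, since it is a defocussing link carrying a meta-edge whose three doors --- the premise $Y$, the canceled hypothesis $X$, and the main conclusion $X \rightarrow Y$ --- play asymmetric roles that do not simply mirror those of an expansion link. In particular, the convention that $X \rightarrow Y$ counts as the conclusion of the virtual edge must be honoured when deciding, in the south direction, which edge an occurrence of $A$ enters through; one must verify that items~\ref{item:empires_up_meta} and \ref{item:empires_down_meta} of Lemma~\ref{lem:empires_constructions} indeed single out this link as the lone additional exit site and that rule (iii) blocks it. Once this is confirmed, the disjointness follows exactly as for Lemma~\ref{lem:nesting_north_not_contain}.
\qed
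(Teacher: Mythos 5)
Your proposal is correct and matches the paper's intent exactly: the paper gives no separate proof of Lemma~\ref{lem:nesting_south_not_contain}, stating only that it follows by an argument ``similar to the previous ones,'' i.e.\ by dualizing the principal meta-switching construction used for Lemmas~\ref{lem:nesting_north_contain} and~\ref{lem:nesting_north_not_contain}. Your rules (i)--(iv) are the correct duals of the paper's four extra conditions, and your handling of the distinguished edge $(A'',A)$ and of the $\rightarrow-I$ exit sites is exactly what the omitted proof requires.
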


\begin{Lem}[Nesting of empires V]\label{lem:nesting_north_contain_south}
Let {\small $A$} and {\small $B$} be distinct formula occurrences in a N-Graph.
If {\small $A \in eB^{\wedge}$} and {\small $B \not\in eA_{\vee}$}, then {\small $eA_{\vee} \subsetneq eB^{\wedge}$}.
\end{Lem}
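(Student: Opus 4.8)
The plan is to avoid re-running a principal-switching argument from scratch and instead to obtain the inclusion directly from the Union Lemma (\ref{lem:union}) together with the maximality characterization of empires (Lemma \ref{lem:largest_empires}). The two hypotheses will play distinct roles: $A \in eB^{\wedge}$ is what guarantees that $eA_{\vee}$ and $eB^{\wedge}$ overlap, while $B \notin eA_{\vee}$ is what guarantees that gluing them together does not spoil the door structure at $B$.

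First I would note that $A \in eA_{\vee}$ by the base clause of the construction of the south empire (condition \ref{item:empires_base} of Lemma \ref{lem:empires_constructions}). Together with the hypothesis $A \in eB^{\wedge}$ this yields $A \in eA_{\vee} \cap eB^{\wedge}$, so the intersection is nonempty and the Union Lemma (\ref{lem:union}) makes $eA_{\vee} \cup eB^{\wedge}$ a sub-N-Graph of $N$.

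The heart of the argument is to check that $B$ is still a lower door of this union. By definition of the north empire, $B$ is a conclusion of $eB^{\wedge}$, i.e. it has out-degree zero there. Since $B \notin eA_{\vee}$, the vertex $B$ does not occur in $eA_{\vee}$ at all, so that component contributes no edge incident to $B$ and in particular no new outgoing edge; hence $B$ still has out-degree zero in $eA_{\vee} \cup eB^{\wedge}$, that is $B \in CONC(eA_{\vee} \cup eB^{\wedge})$. I expect this to be the only delicate point: the whole argument would collapse if attaching $eA_{\vee}$ could create a downward edge out of $B$, and it is precisely the hypothesis $B \notin eA_{\vee}$ that rules this out.

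Finally I would apply maximality: by Lemma \ref{lem:largest_empires} the north empire $eB^{\wedge}$ is the largest sub-N-Graph having $B$ as a lower door, hence it contains the sub-N-Graph $eA_{\vee} \cup eB^{\wedge}$, giving $eA_{\vee} \cup eB^{\wedge} \subseteq eB^{\wedge}$ and therefore $eA_{\vee} \subseteq eB^{\wedge}$. Strictness is immediate, since $B \in eB^{\wedge}$ (base clause for the north empire) while $B \notin eA_{\vee}$, so $B$ witnesses $eA_{\vee} \subsetneq eB^{\wedge}$. Should one prefer to stay closer to the method of Lemmas \ref{lem:nesting_north_contain} and \ref{lem:nesting_north_not_contain}, the same conclusion can be reached by taking a meta-switching principal for $eB^{\wedge}$ whose choices at the switchable links (contraction, expansion and $\rightarrow-I$) also respect the boundary of $eA_{\vee}$, and then tracing, for each $C \in eA_{\vee}$, a path back to $B$ that never leaves $eB^{\wedge}$; but the union/maximality route above is shorter and packages that bookkeeping into the earlier lemmas.
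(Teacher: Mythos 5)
Your argument is correct, and it takes a genuinely different route from the one the paper intends. The paper proves the north--north and south--south nesting lemmas (Lemmas \ref{lem:nesting_north_contain}--\ref{lem:nesting_south_not_contain}) by building a principal meta-switching that simultaneously respects the boundaries of both empires and then tracing paths, and it states that the mixed lemmas V--VII are obtained from those; you instead bypass the switching analysis entirely and argue lattice-theoretically: $A \in eA_{\vee} \cap eB^{\wedge}$ makes the union a sub-N-Graph by Lemma \ref{lem:union}, the hypothesis $B \notin eA_{\vee}$ keeps $B$ of outdegree zero in the union, and maximality (Lemma \ref{lem:largest_empires}) then forces $eA_{\vee} \cup eB^{\wedge} \subseteq eB^{\wedge}$, with strictness witnessed by $B$ itself. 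You correctly isolate the one delicate step, and it does go through --- but note that it relies on reading ``union'' as the union of the links (edges) of the two subgraphs, which is the convention under which Lemma \ref{lem:union} is stated and proved; under an induced-subgraph reading a link with premise $B$ and conclusion inside $eA_{\vee}$ could reintroduce an outgoing edge at $B$, so it is worth saying explicitly that no such link belongs to either subgraph (not in $eA_{\vee}$ since $B \notin eA_{\vee}$, not in $eB^{\wedge}$ since $B$ is a lower door there). What each approach buys: yours is shorter, more modular, and in fact would also reprove Lemmas \ref{lem:nesting_north_contain}, \ref{lem:nesting_south_contain} and \ref{lem:nesting_south_contain_north} uniformly; the paper's switching-based proofs are heavier but produce the explicit path and boundary information that is reused almost verbatim in the split-node argument of Theorem \ref{theo:sequentialization_fragment}.
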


\begin{Lem}[Nesting of empires VI]\label{lem:nesting_south_contain_north}
Let {\small $A$} and {\small $B$} be distinct formula occurrences in a N-Graph. 
If {\small $A \in eB_{\vee}$} and {\small $B \not\in eA^{\wedge}$}, then {\small $eA^{\wedge} \subsetneq eB_{\vee}$}.
\end{Lem}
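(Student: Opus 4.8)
The plan is to reuse the switching technique of Lemmas~\ref{lem:nesting_north_contain} and~\ref{lem:nesting_north_not_contain}, the difference being that the larger region is now the \emph{south} empire {\small $eB_{\vee}$}, while the frontier to be protected is that of the \emph{north} empire {\small $eA^{\wedge}$}. Let {\small $(A,A')$} be the edge deleted to construct {\small $eA^{\wedge}$} (so {\small $A$} is a premise of a link with conclusion {\small $A'$} and {\small $A' \notin eA^{\wedge}$}); if {\small $B$} is the conclusion of no link then {\small $eB_{\vee}=N$} and, since {\small $B \notin eA^{\wedge}$}, the claim is immediate, so assume the edge {\small $(B'',B)$} deleted to construct {\small $eB_{\vee}$} exists. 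Call the conclusion of a contraction, the premise of an expansion and the main conclusion of a {\small $\rightarrow-I$} link the \emph{pivot} of that switchable link. First I would fix a meta-switching {\small $S^p_{\vee}$} that is principal for {\small $eB_{\vee}$} (a principal meta-switching computes its empire, so {\small $S^p_{\vee}(N,B)=eB_{\vee}$}) and, at every switchable link whose pivot lies in {\small $eB_{\vee}$} but not in {\small $eA^{\wedge}$}, select the edge that does \emph{not} cross the frontier of {\small $eA^{\wedge}$}; and if {\small $A$} is a premise of a link whose conclusion {\small $A'$} lies in {\small $eB_{\vee}$}, keep the edge {\small $(A,A')$}.

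The first thing to verify is that these extra choices never conflict with being principal for {\small $eB_{\vee}$}. By the closure conditions of Lemma~\ref{lem:empires_constructions} for {\small $eB_{\vee}$}, a switchable link whose pivot lies in {\small $eB_{\vee}$} and differs from {\small $B$} has all of its formulas in {\small $eB_{\vee}$}; there the principal switching is free to keep either of its two edges, and this is precisely the freedom the extra choices consume. Dually, the closure conditions of Lemma~\ref{lem:empires_constructions} for {\small $eA^{\wedge}$} show that the only edges joining {\small $eA^{\wedge}$} to its complement are the door edge {\small $(A,A')$} and the edges of switchable links having exactly one endpoint inside {\small $eA^{\wedge}$}; the construction deletes each of the latter whenever its pivot lies in {\small $eB_{\vee}$}.

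Now assume {\small $A \in eB_{\vee}$}. Since {\small $S^p_{\vee}$} is principal for {\small $eB_{\vee}$} and {\small $A,B \in eB_{\vee}$}, the unique path from {\small $A$} to {\small $B$} in the tree {\small $S^p_{\vee}$} stays inside {\small $eB_{\vee}$}, and because {\small $B \notin eA^{\wedge}$} it must cross the frontier of {\small $eA^{\wedge}$}. Every edge on this path has both endpoints in {\small $eB_{\vee}$}, so a frontier crossing could only be the kept edge of a switchable link, which forces that link's pivot into {\small $eB_{\vee}$}; but such crossings were removed by the construction. Hence the path can leave {\small $eA^{\wedge}$} only through {\small $(A,A')$}, so {\small $(A,A') \in S^p_{\vee}$} and {\small $A' \in eB_{\vee}$}. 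Let {\small $C_A$} be the component of {\small $A$} obtained by deleting {\small $(A,A')$} from {\small $S^p_{\vee}$}. A routine tree argument --- for any {\small $v \in C_A$} the path from {\small $v$} to {\small $A$} avoids {\small $(A,A')$} and the path from {\small $A$} to {\small $B$} lies in {\small $eB_{\vee}$}, so {\small $v$} reaches {\small $B$} without using {\small $(B'',B)$} --- gives {\small $C_A \subseteq eB_{\vee}$}, while {\small $B \in eB_{\vee} \setminus C_A$}. As {\small $eA^{\wedge} = \bigcap_{S} S^{\wedge}(N,A) \subseteq C_A$} by Lemma~\ref{lem:empires_constructions}, we obtain {\small $eA^{\wedge} \subseteq C_A \subsetneq eB_{\vee}$}, that is {\small $eA^{\wedge} \subsetneq eB_{\vee}$}.

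The step I expect to be the main obstacle is the compatibility check of the second paragraph at the {\small $\rightarrow-I$} link, the only switchable link governed by a meta-edge and the one treated asymmetrically by the north and south closure conditions. To make the detail for {\small $\rightarrow-I$} both protect {\small $eA^{\wedge}$} and remain a legal principal choice for {\small $eB_{\vee}$}, one must use the convention fixed right after the definition of {\small $S^{\wedge}(N,A)$} and {\small $S_{\vee}(N,A)$}, namely that the main conclusion {\small $X \rightarrow Y$} counts as the conclusion of the virtual edge. As a cross-check, since N-Graphs form a \emph{symmetric} calculus, the present lemma is the north--south dual of Lemma~\ref{lem:nesting_north_contain_south}, obtained by the premise--conclusion flip interchanging {\small $eX^{\wedge}$} and {\small $eX_{\vee}$}; deducing it from that lemma would sidestep the switching construction entirely.
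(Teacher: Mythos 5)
Your proof is correct in substance, but it takes a genuinely different route from the paper. The paper offers no direct argument for Lemma~\ref{lem:nesting_south_contain_north}: it only asserts that the mixed north--south nesting lemmas (\ref{lem:nesting_north_contain_south}, \ref{lem:nesting_south_contain_north}, \ref{lem:nesting_north_not_south}) are provable ``from these four nesting lemmas'' (\ref{lem:nesting_north_contain}--\ref{lem:nesting_south_not_contain}), without showing how. You instead re-run the principal-meta-switching construction used in the proof of Lemmas~\ref{lem:nesting_north_contain} and~\ref{lem:nesting_north_not_contain}, with the roles mixed: a meta-switching principal for {\small $eB_{\vee}$} whose remaining freedom is spent sealing the frontier of {\small $eA^{\wedge}$}, so that the {\small $A$}--{\small $B$} path is forced through the door edge {\small $(A,A')$} and the {\small $A$}-side component {\small $S^{\wedge}(N,A) \supseteq eA^{\wedge}$} lands inside {\small $eB_{\vee}$} while missing {\small $B$}. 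This is exactly the machinery the paper deploys for the pure cases, so your argument is arguably more explicit than the paper's one-line justification; in particular it isolates the one genuinely new point, the {\small $\rightarrow$-$I$} link, whose meta-edge breaks the north/south symmetry and is the reason the ``duality'' shortcut you mention at the end cannot be invoked without checking precisely this case.

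Two details to tighten, neither of which threatens the strategy. First, your compatibility check covers switchable links whose pivot lies in {\small $eB_{\vee}$} and differs from {\small $B$}; you should also dispose of the links where Definition~\ref{def:principal_switching} actually binds because {\small $B$} itself is one of the switched formulas or the pivot. There, either the forced edge coincides with the protective one (e.g.\ for a contraction with premise {\small $B$}, the conclusion cannot lie in {\small $eA^{\wedge}$} because {\small $B \not\in eA^{\wedge}$}, so the forced edge does not cross the frontier), or the offending crossing edge has an endpoint outside {\small $eB_{\vee}$} and hence never occurs on the {\small $A$}--{\small $B$} path, which is all your argument needs. (The paper's own proof of Lemmas~\ref{lem:nesting_north_contain} and~\ref{lem:nesting_north_not_contain} leaves the analogous case implicit.) Second, in the ``routine tree argument'' one should observe that since the {\small $A$}--{\small $B$} path uses {\small $(A,A')$} while the {\small $v$}--{\small $A$} path does not, {\small $B$} cannot lie on the latter, hence {\small $(B'',B)$} lies on neither path and {\small $v$} indeed belongs to {\small $S^p_{\vee}(N,B) = eB_{\vee}$}.
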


\begin{Lem}[Nesting of empires VII]\label{lem:nesting_north_not_south}
Let {\small $A$} and {\small $B$} be distinct formula occurrences in a N-Graph.
If {\small $A \not\in eB^{\wedge}$} and {\small $B \not\in eA_{\vee}$}, then {\small $eA_{\vee} \cap eB^{\wedge} = \emptyset$}.
\end{Lem}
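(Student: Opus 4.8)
The plan is to adapt the principal-meta-switching argument used for Lemmas~\ref{lem:nesting_north_contain} and~\ref{lem:nesting_north_not_contain}, replacing the role of the north empire $eA^{\wedge}$ by the south empire $eA_{\vee}$. Concretely, I would build a principal meta-switching $S_p^{\wedge}$ for $eB^{\wedge}$ (so that $S_p^{\wedge}(N,B)=eB^{\wedge}$ by the corollary to Lemma~\ref{lem:empires_constructions}), imposing extra choices at exactly the switchable links --- contraction, expansion and $\rightarrow$-$I$ --- whose controlling formula (the conclusion, the premise, and the main conclusion, respectively) already lies in $eB^{\wedge}$. At each such link, if that formula is not in $eA_{\vee}$, I make the same edge choice that a principal meta-switching for $eA_{\vee}$ would make; and if $A$ is the conclusion of a link whose premise $A''$ lies in $eB^{\wedge}$, I select the door edge $(A'',A)$. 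These extra requirements are consistent with $S_p^{\wedge}$ being principal for $eB^{\wedge}$, because whenever the controlling formula of a switchable link belongs to $eB^{\wedge}$ the closure conditions of construction~\ref{item:empire_link} force the remaining formulas of that link into $eB^{\wedge}$ as well, so the principal conditions for $eB^{\wedge}$ leave that edge free to be fixed by the $eA_{\vee}$ rule.

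The core observation, exactly as in Lemma~\ref{lem:nesting_north_not_contain}, is that the only ways to leave $eA_{\vee}$ in this switching are through one of these three kinds of switchable links with partial membership, or through the door edge $(A'',A)$. The three switchable cases are blocked by the extra choices above, so inside $eB^{\wedge}$ there is no edge joining $eA_{\vee}$ to its complement $\overline{eA_{\vee}}$, except possibly $(A'',A)$. But the hypothesis $A \not\in eB^{\wedge}$ gives $A \not\in S_p^{\wedge}(N,B)$, whence $(A'',A)\not\in S_p^{\wedge}(N,B)=eB^{\wedge}$ (an endpoint would otherwise force $A$ into $eB^{\wedge}$). Thus $eB^{\wedge}$ contains no edge crossing the boundary of $eA_{\vee}$ at all.

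To finish, suppose toward a contradiction that some $C \in eA_{\vee}\cap eB^{\wedge}$. Since $eB^{\wedge}$ is an N-Graph, its meta-switching $S_p^{\wedge}(N,B)$ is connected, so there is a path from $C$ to $B$ inside $eB^{\wedge}$. Because $B \not\in eA_{\vee}$, this path must cross from $eA_{\vee}$ to $\overline{eA_{\vee}}$, that is, it must use an edge of $eB^{\wedge}$ joining the two regions --- which we have just shown cannot exist. Hence no such $C$ exists and $eA_{\vee}\cap eB^{\wedge}=\emptyset$.

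I expect the delicate step to be the compatibility check of the first paragraph: verifying that the $eA_{\vee}$-driven choices never clash with the principal conditions already imposed for $eB^{\wedge}$ (including the edge case where a premise of a controlled contraction, a conclusion of a controlled expansion, or a formula of a controlled $\rightarrow$-$I$ link is $B$ itself), and confirming that these three switchable links are indeed the complete list of boundary crossings for the south empire --- the analogue, for $eA_{\vee}$, of the enumeration of escape routes in Lemma~\ref{lem:nesting_north_not_contain}. Once this bookkeeping is carried out through the south versions of the closure conditions in Lemma~\ref{lem:empires_constructions}, the connectivity argument closes the proof. Although the statement is asserted to follow from the four homogeneous nesting lemmas, the direct switching construction above is self-contained and parallels the established pattern most transparently.
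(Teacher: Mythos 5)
Your direct switching argument is essentially sound, but it is not the route the paper takes: the paper gives no proof of Lemma~\ref{lem:nesting_north_not_south} at all, asserting only that the three mixed nesting lemmas can be obtained from the four homogeneous ones (Lemmas~\ref{lem:nesting_north_contain}--\ref{lem:nesting_south_not_contain}), without indicating how. Your construction instead transplants the proof of Lemmas~\ref{lem:nesting_north_contain} and~\ref{lem:nesting_north_not_contain}, replacing {\small $eA^{\wedge}$} by {\small $eA_{\vee}$} and the lower door edge {\small $(A,A')$} by the upper door edge {\small $(A'',A)$}. The key facts you rely on do hold: the enumeration of boundary crossings for the south empire (a contraction, expansion or {\small $\rightarrow-I$} link with only partial membership in {\small $eA_{\vee}$}, plus the single door edge above {\small $A$}) is the exact mirror of the north enumeration, because the closure conditions of Lemma~\ref{lem:empires_constructions} except the formula {\small $A$} only in the ``upward'' clauses for {\small $eA_{\vee}$}; and the compatibility of your extra choices with principality for {\small $eB^{\wedge}$} is guaranteed, as you note, by the fact that a switchable link whose controlling formula lies in {\small $eB^{\wedge}$} has all its formulas in {\small $eB^{\wedge}$}, so conditions \ref{item:principal_contraction}--\ref{item:principal_meta} of Definition~\ref{def:principal_switching} do not constrain it. This yields a self-contained argument where the paper leaves a genuine gap (it is not obvious how Lemma~VII follows purely set-theoretically from Lemmas~I--IV), at the cost of redoing the switching bookkeeping once more. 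One small repair: the clause ``select the door edge {\small $(A'',A)$}'' is only needed for the containment lemmas (V and VI); in the disjointness case it can clash with the principal conditions for {\small $eB^{\wedge}$} (for instance when {\small $A$} is the conclusion of a contraction or expansion exactly one of whose other formulas lies in {\small $eB^{\wedge}$}), so it is cleaner to drop it here and rely, as you already do in your second paragraph, on the observation that {\small $A \not\in S_p^{\wedge}(N,B)$} forces the door edge to lie outside {\small $eB^{\wedge}$} whatever the switching chooses.
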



\section{Whole empires}\label{sec:kingdoms}

We defined the \emph{whole empire of {\small $A$}} as the union of the north and the south empires of {\small $A$}.
Now we use the north and south empires properties to find new ones about whole empires.

\begin{Lem}\label{lem:subNgraph_kingdom}
{\small $wA$} is a sub-N-graph.
\end{Lem}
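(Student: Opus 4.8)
The plan is to obtain $wA$ as a sub-N-Graph by a direct appeal to the Union lemma (Lemma \ref{lem:union}), since $wA$ is by definition the union $eA^{\wedge} \cup eA_{\vee}$. First I would recall that both $eA^{\wedge}$ and $eA_{\vee}$ are themselves sub-N-Graphs of $N$: this is exactly what Lemma \ref{lem:empires_constructions} establishes, each empire existing and coinciding with its principal construction. Next I would verify the hypothesis of Lemma \ref{lem:union}, namely that $eA^{\wedge} \cap eA_{\vee} \neq \emptyset$. This is immediate from the base clause of the inductive construction (condition \ref{item:empires_base}): the formula occurrence $A$ lies in $eA^{\wedge}$ and in $eA_{\vee}$, so $A \in eA^{\wedge} \cap eA_{\vee}$ and the intersection is nonempty.

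Having checked the hypothesis, I would apply Lemma \ref{lem:union} to conclude that $eA^{\wedge} \cup eA_{\vee} = wA$ is an N-Graph; internally this is what discharges acyclicity (inherited from $N$) and connectedness (routed through the shared vertex $A$). It then remains only to observe that $wA$ is a subgraph of $N$: it is a union of two subgraphs of $N$, and since every vertex of $N$ is premise and conclusion of at most one link, passing to a subgraph can only remove incident links and never add any, so this property is inherited and $wA$ is a subproof-graph. Combining this with the fact that $wA$ is an N-Graph yields, by definition of sub-N-Graph, the claim. I do not expect any genuine obstacle here: the entire content is carried by the Union lemma, and the only thing to confirm is the trivial nonemptiness of $eA^{\wedge} \cap eA_{\vee}$, supplied by the common vertex $A$. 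This is precisely the remark already anticipated immediately after the definition of the whole empire.
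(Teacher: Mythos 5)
Your proof is correct and follows essentially the same route as the paper: both establish that $eA^{\wedge}$ and $eA_{\vee}$ are sub-N-Graphs, observe that their intersection is nonempty because it contains $A$, and conclude via the Union lemma (Lemma \ref{lem:union}). The only cosmetic difference is that the paper cites Lemma \ref{lem:largest_empires} for the empires being sub-N-Graphs while you cite Lemma \ref{lem:empires_constructions}; either is acceptable.
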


\begin{proof}
Once we proved that {\small $eA^{\wedge}$} and {\small $eA_{\vee}$} are N-graphs (lemma \ref{lem:largest_empires}) and
{\small $eA^{\wedge} \cap eA_{\vee} = \{A\}$}, we get {\small $wA = eA^{\wedge} \cup eA_{\vee}$} is a sub-N-graph 
by lemma \ref{lem:union}.
\qed
\end{proof}

\begin{Cor}\label{cor:terminal_kingdom}
Let {\small $A$} be a premise and {\small $B$} a conclusion. Then {\small $wA = wB = N$} 
(by corollary \ref{cor:terminal_empire}).
\end{Cor}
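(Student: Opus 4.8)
The plan is to read off the result directly from the decomposition of the whole empire together with the already-established Corollary~\ref{cor:terminal_empire}. Recall from Lemma~\ref{lem:subNgraph_kingdom} (and its proof) that {\small $wA = eA^{\wedge} \cup eA_{\vee}$}, and that by construction both {\small $eA^{\wedge}$} and {\small $eA_{\vee}$} are sub-N-Graphs of {\small $N$}, hence contained in {\small $N$}. So the only thing I need is that one of the two empires already exhausts {\small $N$}, and then the union collapses.

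First I would handle the premise case. Since {\small $A$} is a premise, Corollary~\ref{cor:terminal_empire} gives {\small $eA_{\vee} = N$}. Because {\small $eA^{\wedge} \subseteq N$}, I conclude
{\small $wA = eA^{\wedge} \cup eA_{\vee} = eA^{\wedge} \cup N = N$}.
Then, symmetrically, since {\small $B$} is a conclusion, the same corollary yields {\small $eB^{\wedge} = N$}, and as {\small $eB_{\vee} \subseteq N$} I get
{\small $wB = eB^{\wedge} \cup eB_{\vee} = N \cup eB_{\vee} = N$}.
This closes both equalities.

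There is essentially no obstacle here: the statement is a one-line consequence of Corollary~\ref{cor:terminal_empire} once the union decomposition of {\small $wA$} is in hand, and the parenthetical hint in the statement confirms exactly this route. The only point worth flagging — rather than a genuine difficulty — is to make explicit that an empire is always a sub-N-Graph of the ambient {\small $N$}, so that absorbing it into {\small $N$} under union is justified; this is immediate from the definitions of north and south empires as sub-N-Graphs of {\small $N$}.
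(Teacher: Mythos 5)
Your proposal is correct and matches the paper's intended argument exactly: the paper proves this corollary only by the parenthetical appeal to Corollary~\ref{cor:terminal_empire}, and you have simply spelled out the one-line computation {\small $wA = eA^{\wedge} \cup eA_{\vee} = eA^{\wedge} \cup N = N$} (and symmetrically for {\small $B$}) that this appeal relies on. Nothing further is needed.
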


\begin{Lem}[Nesting of whole empires I]\label{lem:nesting_kingdom_not_contain}
Let {\small $A$} and {\small $B$} be distinct occurrences. If {\small $A \not\in wB$} and {\small $B \not\in wA$},
then {\small $wA \cap wB = \emptyset$}.
\end{Lem}

\begin{proof}
Since {\small $A \not\in wB$} and {\small $B \not\in wA$}, we get: {\small $A \not\in eB^{\wedge}$}, 
{\small $A \not\in eB_{\vee}$}, {\small $B \not\in eA^{\wedge}$} and {\small $B \not\in eA_{\vee}$}.
We apply nesting of empires lemmas:
\begin{enumerate}
\item \label{item:nesting_kingdom_A_north_B_north}
if {\small $B \not\in eA^{\wedge}$} and {\small $A \not\in eB^{\wedge}$}, then 
{\small $eA^{\wedge} \cap eB^{\wedge} = \emptyset$} (by lemma \ref{lem:nesting_north_not_contain});

\item \label{item:nesting_kingdom_A_south_B_north}
if {\small $B \not\in eA_{\vee}$} and {\small $A \not\in eB^{\wedge}$}, then 
{\small $eA_{\vee} \cap eB^{\wedge} = \emptyset$} (by lemma \ref{lem:nesting_north_not_south});

\item \label{item:nesting_kingdom_B_north_inter}
uniting \ref{item:nesting_kingdom_A_north_B_north} and \ref{item:nesting_kingdom_A_south_B_north} and applying the 
distributive law: {\small $eB^{\wedge} \cap (eA^{\wedge} \cup eA_{\vee}) = \emptyset$};

\item \label{item:nesting_kingdom_A_north_B_south}
if {\small $B \not\in eA^{\wedge}$} and {\small $A \not\in eB_{\vee}$}, then 
{\small $eA^{\wedge} \cap eB_{\vee} = \emptyset$} (by lemma \ref{lem:nesting_north_not_south});

\item \label{item:nesting_kingdom_A_south_B_south}
if {\small $B \not\in eA_{\vee}$} and {\small $A \not\in eB_{\vee}$}, then 
{\small $eA_{\vee} \cap eB_{\vee} = \emptyset$} (by lemma \ref{lem:nesting_south_not_contain});

\item \label{item:nesting_kingdom_B_south_inter}
uniting \ref{item:nesting_kingdom_A_north_B_south} and \ref{item:nesting_kingdom_A_south_B_south} and applying the 
distributive law: {\small $eB_{\vee} \cap (eA^{\wedge} \cup eA_{\vee}) = \emptyset$};

\item \label{item:nesting_kingdom_last}
uniting \ref{item:nesting_kingdom_B_north_inter} and \ref{item:nesting_kingdom_B_south_inter}, the 
distributive law: {\small $(eA^{\wedge} \cup eA_{\vee}) \cap (eB^{\wedge} \cup eB_{\vee}) = \emptyset$}.
\end{enumerate}
\qed
\end{proof}

\begin{Lem}[Nesting of whole empires II]\label{lem:nesting_kingdom_contain}
Let {\small $A$} and {\small $B$} be distinct occurrences. If {\small $A \in wB$} and {\small $B \not\in wA$},
then {\small $wA \subsetneq wB$}.
\end{Lem}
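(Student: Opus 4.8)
The plan is to reduce the statement to the already-established containment versions of the nesting lemmas by splitting on how $A$ sits inside $wB$. Since $wB = eB^{\wedge} \cup eB_{\vee}$, the hypothesis $A \in wB$ gives two cases: $A \in eB^{\wedge}$ or $A \in eB_{\vee}$. In either case the hypothesis $B \not\in wA$ unpacks to the two conditions $B \not\in eA^{\wedge}$ and $B \not\in eA_{\vee}$, which are precisely the side conditions required to fire those lemmas.

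First I would treat the case $A \in eB^{\wedge}$. Here, combining $A \in eB^{\wedge}$ with $B \not\in eA^{\wedge}$ and applying Lemma \ref{lem:nesting_north_contain} yields $eA^{\wedge} \subsetneq eB^{\wedge}$, while combining $A \in eB^{\wedge}$ with $B \not\in eA_{\vee}$ and applying Lemma \ref{lem:nesting_north_contain_south} yields $eA_{\vee} \subsetneq eB^{\wedge}$. Taking the union gives $wA = eA^{\wedge} \cup eA_{\vee} \subseteq eB^{\wedge} \subseteq wB$. The case $A \in eB_{\vee}$ is symmetric: Lemma \ref{lem:nesting_south_contain} gives $eA_{\vee} \subsetneq eB_{\vee}$ and Lemma \ref{lem:nesting_south_contain_north} gives $eA^{\wedge} \subsetneq eB_{\vee}$, so again $wA \subseteq eB_{\vee} \subseteq wB$. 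Thus in both cases $wA \subseteq wB$.

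For the strictness, I would route the argument through an explicit witness rather than through the inclusions of the individual empires. By the base condition \ref{item:empires_base} of Lemma \ref{lem:empires_constructions} we have $B \in eB^{\wedge}$, hence $B \in wB$; on the other hand $B \not\in wA$ by hypothesis. Therefore $wB \setminus wA \neq \emptyset$, which together with $wA \subseteq wB$ delivers $wA \subsetneq wB$.

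I do not expect a genuine obstacle, since the argument is largely bookkeeping once the seven nesting lemmas are available; the one point requiring care is checking that the two cases arising from $A \in wB$ are exhaustive and that, in each, \emph{both} of the needed nesting lemmas have their side conditions supplied by the single hypothesis $B \not\in wA$. The properness step is the subtle part: a strict inclusion such as $eA^{\wedge} \subsetneq eB^{\wedge}$ need not survive the union with $eA_{\vee}$, so strictness cannot be read off from the empire-level strict inclusions, and this is exactly why I establish it via the concrete witness $B$.
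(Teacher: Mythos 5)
Your proof is correct and follows essentially the same route as the paper's: the same case split on $A \in eB^{\wedge}$ versus $A \in eB_{\vee}$, and the same pairs of nesting lemmas (Lemmas \ref{lem:nesting_north_contain} and \ref{lem:nesting_north_contain_south} in the north case, Lemmas \ref{lem:nesting_south_contain} and \ref{lem:nesting_south_contain_north} in the south case) to place both $eA^{\wedge}$ and $eA_{\vee}$ inside a single empire of $B$. The one place you diverge is the strictness step, and there your version is tighter than the paper's. The paper's proof passes from $eA^{\wedge} \subsetneq eB^{\wedge}$ and $eA_{\vee} \subsetneq eB^{\wedge}$ directly to $eA^{\wedge} \cup eA_{\vee} \subsetneq eB^{\wedge}$, an inference that is not valid as a general set-theoretic principle (two proper subsets can union to the whole set), so as written that step leans on an unstated fact. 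Your witness argument --- $B \in eB^{\wedge} \subseteq wB$ by condition \ref{item:empires_base} of Lemma \ref{lem:empires_constructions}, while $B \not\in wA$ by hypothesis --- supplies exactly the justification that makes the union properly contained, and is the cleaner way to close the proof.
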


\begin{proof}
Once {\small $B \not\in wA$}, we have {\small $B \not\in eA^{\wedge}$} and {\small $B \not\in eA_{\vee}$}.
For {\small $A \in wB$} we get {\small $A \in eB^{\wedge}$} or {\small $A \in eB_{\vee}$}. We will prove the lemma
for {\small $A \in eB^{\wedge}$} (the case for south is analogous):

\begin{enumerate}
\item if {\small $A \in eB^{\wedge}$} and {\small $B \not\in eA^{\wedge}$}, then
{\small $eA^{\wedge} \subsetneq eB^{\wedge}$} (by lemma \ref{lem:nesting_north_contain});

\item if {\small $A \in eB^{\wedge}$} and {\small $B \not\in eA_{\vee}$}, then
{\small $eA_{\vee} \subsetneq eB^{\wedge}$} (by lemma \ref{lem:nesting_north_contain_south});

\item if {\small $eA^{\wedge} \subsetneq eB^{\wedge}$} and {\small $eA_{\vee} \subsetneq eB^{\wedge}$} , then
{\small $eA^{\wedge} \cup eA_{\vee} \subsetneq eB^{\wedge}$};

\item if {\small $eA^{\wedge} \cup eA_{\vee} \subsetneq eB^{\wedge}$}, then 
{\small $eA^{\wedge} \cup eA_{\vee} \subsetneq eB^{\wedge} \cup eB_{\vee}$}.
\end{enumerate}
\qed
\end{proof}

\begin{Def}[{\small $\ll$}]\label{def:order}
Let {\small $A$} and {\small $B$} be formula occurrences of {\small $N$}. We say {\small $A \ll B$} iff {\small $wA \subsetneq wB$}.
\end{Def}

It is immediate that {\small $\ll$} is a strict ordering of formula occurrences of {\small $N$} which are not premises
neither conclusions, since we have for any
domain set {\small $X$} and any subset {\small $Q$} of {\small $\wp(X)$}\footnote{{\small $\wp(X)$} is the power set of 
{\small $X$}}, {\small $(\subseteq, Q)$}
is a poset. Maximal formulas with regard to {\small $\ll$} will split {\small $N$}. 
Given that the whole empires of premises and conclusions are always equal to {\small $N$} by corollary
\ref{cor:terminal_kingdom}, we are not interested in these formulas. So they are not in the domain of {\small $\ll$}.
One may easily verify that if there are no contraction, extension and {\small $\rightarrow-I$}
links, for all formula {\small $A$} of {\small $N$}, {\small $wA=N$} and so any formula would be maximal.
The next three following lemmas show how these links act on {\small $\ll$}.

\begin{Lem}\label{lem:kingdom_order_contraction}
Let {\small $l=\frac{\ldots X \ldots}{\ldots Y \ldots}$} be a link different from {\small $\rightarrow-I$} such that there
is a formula-occurrence {\small $A$} which {\small $X \in wA$} and {\small $Y \not\in wA$}. Then {\small $A \ll Y$}.
\end{Lem}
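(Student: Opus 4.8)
The plan is to deduce $A \ll Y$, that is $wA \subsetneq wY$ (Definition \ref{def:order}), from the whole-empire nesting Lemma \ref{lem:nesting_kingdom_contain}. That lemma says that $A \in wB$ together with $B \notin wA$ gives $wA \subsetneq wB$; taking $B = Y$ and using the hypothesis $Y \notin wA$ (which also forces $A \neq Y$, since $A \in wA$ always), the whole argument reduces to establishing the single membership $A \in wY$. I will in fact prove the stronger statement $A \in eY^{\wedge}$.

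The first step places the premise $X$ of $l$ inside $eY^{\wedge}$. Since $Y$ is a conclusion of $l$ and $X$ a premise, the upward closure conditions of the link characterisation (Lemma \ref{lem:empires_constructions}, item \ref{item:empire_link}) immediately yield $X \in eY^{\wedge}$ whenever $l$ is a simple, conjunctive focussing, disjunctive defocussing or contraction link. The sole exception is the expansion link, whose upward rule \ref{item:empires_up_expansion} requires \emph{both} of its conclusions. I would rule this exception out using $Y \notin wA$: were $l$ an expansion with premise $X$, then from $X \in wA$ — so either $X \in eA_{\vee}$, or $X \in eA^{\wedge}$ with $X \neq A$ (note $A \in eA_{\vee}$ always, so the case $X = A$ falls under the first alternative) — the downward rule \ref{item:empires_down_expansion} for the respective empire would force $Y$ into $eA_{\vee}$ or $eA^{\wedge}$, hence into $wA$, a contradiction. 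Therefore $X \in eY^{\wedge}$.

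The second step upgrades $X \in eY^{\wedge}$ to $A \in eY^{\wedge}$ via the empire-nesting lemmas. From $Y \notin wA$ we have $Y \notin eA^{\wedge}$ and $Y \notin eA_{\vee}$. Assume, for contradiction, that $A \notin eY^{\wedge}$. If $X \in eA^{\wedge}$, then Lemma \ref{lem:nesting_north_not_contain} applied to $A$ and $Y$ gives $eA^{\wedge} \cap eY^{\wedge} = \emptyset$; if instead $X \in eA_{\vee}$, then Lemma \ref{lem:nesting_north_not_south} gives $eA_{\vee} \cap eY^{\wedge} = \emptyset$. In either case $X$ would lie in an empty intersection, which is impossible. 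Hence $A \in eY^{\wedge} \subseteq wY$, and Lemma \ref{lem:nesting_kingdom_contain} concludes $wA \subsetneq wY$, that is $A \ll Y$.

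I expect the delicate point to be the first step: among the links different from $\rightarrow-I$, the expansion is the only one whose switchable, semantically conjunctive geometry breaks the otherwise uniform upward propagation of $X$ into $eY^{\wedge}$, so it must be excluded separately through the downward closure conditions; once $X \in eY^{\wedge}$ is secured, the two empire-nesting families and the whole-empire nesting lemma combine mechanically. In fact the same downward-closure analysis shows that the hypotheses $X \in wA$ and $Y \notin wA$ can be met only by a contraction link (for every other non-$\rightarrow-I$ link the downward rules force $Y \in wA$), which is why this lemma is, in effect, the contraction case.
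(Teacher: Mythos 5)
Your proof is correct and follows essentially the same route as the paper's: establish $X \in eY^{\wedge}$, use the resulting nonempty intersection with $eA^{\wedge}$ (resp.\ $eA_{\vee}$) together with Nesting Lemma \ref{lem:nesting_north_not_contain} (resp.\ \ref{lem:nesting_north_not_south}) to force $A \in eY^{\wedge}$, and conclude with Lemma \ref{lem:nesting_kingdom_contain}. The only (harmless) organizational difference is that the paper first pins down $l$ as a contraction link via the downward closure conditions and then applies clause \ref{item:empires_up_contraction}, whereas you obtain $X \in eY^{\wedge}$ uniformly from the upward clauses and invoke the downward ones only to exclude the expansion case.
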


\begin{proof}
Once {\small $X \in eA^{\wedge} \cup eA_{\vee}$}, we have two cases. If
\emph{{\small $X \in eA^{\wedge}$}}, then
since {\small $Y \not\in eA^{\wedge}$}, {\small $l$} must be a contraction link and its other premise does not belong to 
{\small $eA^{\wedge}$} (construction \ref{item:empire_link} in lemma \ref{lem:empires_constructions}). Therefore {\small $Y$}
is a conclusion of a contraction link and this implies {\small $X \in eY^{\wedge}$} (by \ref{item:empires_up_contraction} in
lemma \ref{lem:empires_constructions}). So {\small $eA^{\wedge} \cap eY^{\wedge} \neq \emptyset$}. 
If {\small $A \not\in eY^{\wedge}$}, then we will have {\small $eA^{\wedge} \cap eY^{\wedge} = \emptyset$} 
(by lemma \ref{lem:nesting_north_not_contain}): a contradiction. Thus {\small $A \in eY^{\wedge}$} and, by lemma 
\ref{lem:nesting_kingdom_contain}, we conclude {\small $wA \subsetneq wY$}.
The case for \emph{{\small $X \in eA_{\vee}$}} is analogous.

\qed
\end{proof}

Next lemma is similar, but for expansion link: 
\begin{Lem}\label{lem:kingdom_order_expansion}
Let {\small $l=\frac{\ldots X \ldots}{\ldots Y \ldots}$} be a link different from {\small $\rightarrow-I$} such that there
is a formula-occurrence {\small $A$} which {\small $X \not\in wA$} and {\small $Y \in wA$}. Then {\small $A \ll X$}.
\end{Lem}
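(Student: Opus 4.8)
The plan is to mirror, \emph{mutatis mutandis}, the proof of Lemma~\ref{lem:kingdom_order_contraction}, dualizing premise with conclusion, north with south, and contraction with expansion. Since $Y \in wA = eA^{\wedge} \cup eA_{\vee}$, I would split into the two cases $Y \in eA^{\wedge}$ and $Y \in eA_{\vee}$ and argue each the same way; the only delicate point is keeping the ``$\neq A$'' side conditions of the south construction straight.

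First I would show that $l$ must be an expansion link. Reading the closure conditions of Lemma~\ref{lem:empires_constructions}(\ref{item:empire_link}) from a conclusion up to a premise, every link other than the expansion and the excluded $\rightarrow-I$ propagates membership upward: if $Y \in eA^{\wedge}$ then conditions \ref{item:empires_up_simple}--\ref{item:empires_up_disjunctive} and \ref{item:empires_up_contraction} force $X \in eA^{\wedge} \subseteq wA$, and if $Y \in eA_{\vee}$ (with $Y \neq A$, the case $Y = A$ falling under $Y \in eA^{\wedge}$ by the base condition) the corresponding south clauses force $X \in eA_{\vee} \subseteq wA$. Both contradict $X \notin wA$. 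The one exception is the expansion link, whose premise is pulled in only when \emph{both} conclusions already belong to the empire (condition \ref{item:empires_up_expansion}); here the sibling conclusion stays outside. The sole subtlety is a disjunctive defocussing link having $A$ itself as its other conclusion, so that the south ``$\neq A$'' clause of \ref{item:empires_up_disjunctive} does not apply; but then $A$ is a conclusion of that link, hence $A \in eA^{\wedge}$, and the \emph{north} form of \ref{item:empires_up_disjunctive} already forces $X \in eA^{\wedge}$, again a contradiction. Thus $l$ is an expansion link with premise $X$ and $Y$ one of its conclusions.

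Next I would bring $Y$ into an empire of $X$. Since $X \in eX_{\vee}$ by the base condition \ref{item:empires_base}, the south instance of the down-closure condition \ref{item:empires_down_expansion} (which, unlike its north counterpart, carries no ``$\neq A$'' restriction) puts both conclusions of the expansion link into $eX_{\vee}$; in particular $Y \in eX_{\vee}$. Thus $eX_{\vee}$ shares the occurrence $Y$ with $wA$.

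Finally I would convert this overlap into $A \in wX$ and conclude. If $Y \in eA^{\wedge}$, then $eX_{\vee} \cap eA^{\wedge} \neq \emptyset$, and since $X \notin eA^{\wedge}$ the contrapositive of Nesting of empires~VII (Lemma~\ref{lem:nesting_north_not_south}, taken with $X$ in the south role and $A$ in the north role) gives $A \in eX_{\vee}$; if instead $Y \in eA_{\vee}$, then $eX_{\vee} \cap eA_{\vee} \neq \emptyset$, and since $X \notin eA_{\vee}$ the contrapositive of Nesting of empires~IV (Lemma~\ref{lem:nesting_south_not_contain}) again gives $A \in eX_{\vee}$. Either way $A \in wX$, while $X \notin wA$ by hypothesis, so Nesting of whole empires~II (Lemma~\ref{lem:nesting_kingdom_contain}) yields $wA \subsetneq wX$, that is $A \ll X$ by Definition~\ref{def:order}. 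I expect the main obstacle to be exactly the bookkeeping of the asymmetric ``$\neq A$'' conditions in the south construction and the correct orientation (which occurrence plays the role of $A$ and which of $B$) in the two nesting lemmas, since an inverted labeling there would point at the wrong empire.
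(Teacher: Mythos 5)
Your proof is correct and is precisely the dualization the paper intends: the paper omits this proof, stating only that it is ``similar'' to that of Lemma~\ref{lem:kingdom_order_contraction}, and your argument carries out that analogy faithfully (expansion in place of contraction, $Y \in eX_{\vee}$ via the south down-closure in place of $X \in eY^{\wedge}$ via the north up-closure, and Lemmas~\ref{lem:nesting_south_not_contain}/\ref{lem:nesting_north_not_south} with the roles correctly oriented). Your handling of the $Y=A$ and sibling-equals-$A$ edge cases in the south closure conditions is also sound.
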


The following lemma is the corresponding for {\small $\rightarrow-I$} link and its use same ideas as above.

\begin{Lem}\label{lem:kingdom_order_meta}
Let {\small $l=\frac{Y}{X ~ X \rightarrow Y}$} be a {\small $\rightarrow-I$} link such that there is a formula occurrence
{\small $A$} which {\small $Y \in wA$} or {\small $X \in wA$}, but {\small $(X \rightarrow Y) \not\in wA$}.
Then {\small $A \ll (X \rightarrow Y)$}.
\end{Lem}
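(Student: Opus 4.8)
The plan is to adapt the argument of Lemma \ref{lem:kingdom_order_contraction}, now taking the compound conclusion $X \rightarrow Y$ as the target formula. The first and decisive step is to observe that both the premise $Y$ and the canceled hypothesis $X$ already sit inside the north empire of the target: since $X \rightarrow Y \in e(X \rightarrow Y)^{\wedge}$ by the base condition \ref{item:empires_base}, the up-closure condition \ref{item:empires_up_meta} of lemma \ref{lem:empires_constructions} forces $Y, X \in e(X \rightarrow Y)^{\wedge}$. This is the analogue of the step in Lemma \ref{lem:kingdom_order_contraction} where the contraction geometry places $X$ in $eY^{\wedge}$, except that here the meta-edge makes both lower doors of the link enter $e(X \rightarrow Y)^{\wedge}$ simultaneously.

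Next I would fix the witness $Z \in \{X, Y\}$ with $Z \in wA$ supplied by the hypothesis, so that $Z \in eA^{\wedge}$ or $Z \in eA_{\vee}$. Combined with $Z \in e(X \rightarrow Y)^{\wedge}$ from the first step, this gives $eA^{\wedge} \cap e(X \rightarrow Y)^{\wedge} \neq \emptyset$ or $eA_{\vee} \cap e(X \rightarrow Y)^{\wedge} \neq \emptyset$. In the first situation I invoke the contrapositive of Lemma \ref{lem:nesting_north_not_contain}, in the second the contrapositive of the mixed nesting Lemma \ref{lem:nesting_north_not_south}; each yields $A \in e(X \rightarrow Y)^{\wedge}$ or $(X \rightarrow Y) \in eA^{\wedge}$ (resp. $(X \rightarrow Y) \in eA_{\vee}$). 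Since the hypothesis $(X \rightarrow Y) \not\in wA$ rules out the second disjunct in both cases, we obtain $A \in e(X \rightarrow Y)^{\wedge} \subseteq w(X \rightarrow Y)$. Finally, because $A \in w(X \rightarrow Y)$ and $(X \rightarrow Y) \not\in wA$, the nesting of whole empires (Lemma \ref{lem:nesting_kingdom_contain}) delivers $wA \subsetneq w(X \rightarrow Y)$, which is exactly $A \ll (X \rightarrow Y)$ by Definition \ref{def:order}. Note that $A \neq X \rightarrow Y$, since $X \rightarrow Y \in w(X \rightarrow Y)$ but $X \rightarrow Y \not\in wA$, so the nesting lemmas, which require distinct occurrences, do apply; and if the witness $Z$ happens to equal $A$ it falls harmlessly into the first situation via the base condition.

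The main obstacle is the bookkeeping forced by the global discharge rule rather than any genuinely new idea: one must make sure that the canceled hypothesis $X$, which is reached only through the virtual edge, is treated uniformly with the honest premise $Y$. This is precisely what condition \ref{item:empires_up_meta} guarantees, so the improper character of $\rightarrow-I$ is absorbed into the empire construction and need not be re-examined link by link. The only subcase requiring real care is when the witness lies in the south empire $eA_{\vee}$ while the target is captured by the north empire $e(X \rightarrow Y)^{\wedge}$; this cross-empire intersection is exactly what the mixed lemma \ref{lem:nesting_north_not_south} was designed to handle, and invoking it there closes the argument without any separate induction.
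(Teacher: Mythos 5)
Your proposal is correct and follows essentially the route the paper intends: the paper gives no explicit proof for this lemma, merely noting it "uses the same ideas" as Lemma \ref{lem:kingdom_order_contraction}, and your argument is exactly that adaptation --- place $X$ and $Y$ inside $e(X \rightarrow Y)^{\wedge}$ via conditions \ref{item:empires_base} and \ref{item:empires_up_meta}, derive $A \in e(X \rightarrow Y)^{\wedge}$ from the contrapositives of the nesting Lemmas \ref{lem:nesting_north_not_contain} and \ref{lem:nesting_north_not_south}, and conclude with Lemma \ref{lem:nesting_kingdom_contain}. Your explicit checks that $A \neq X \rightarrow Y$ and that the witness may coincide with $A$ are welcome details the paper leaves implicit.
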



\section{Sequentialization}\label{sec:sequent}

We saw in Sections \ref{sec:empires} and \ref{sec:kingdoms} how to define empires for proof-graphs with switchable defocussing
links (expansion) and proved some properties. Now we will show a new proof of sequentialization for these proof-graphs.
Without loss of generality, we assume {\small $\top$} as {\small $A \vee \neg{A}$} and {\small $\bot$} as 
{\small $A \wedge \neg{A}$}, where the formula {\small $A$} belongs to the premise or conclusion of the link.

\begin{Theo}[Sequentialization]
\label{theo:sequentialization_fragment}
Given a N-Graph derivation {\small $N$}, there is a sequent calculus derivation 
{\small $SC(N)$} of {\small $A_1,\dots,A_n \vdash B_1,\dots,B_m$} in the classical sequent calculus whose occurrences
of formulas {\small $A_1,\dots,A_n$} and {\small $B_1,\dots,B_m$} are in one-to-one
correspondence with the elements of {\small $PREMIS(N)$} and {\small $CONC(N)$}, respectively.
\end{Theo}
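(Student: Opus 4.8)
The plan is to prove this by induction on the number of links (or equivalently the number of formula occurrences) in the N-Graph $N$. The base case is when $N$ consists of a single vertex with no edges, which is an axiom: it directly corresponds to the sequent $A \vdash A$ in the sequent calculus. For the inductive step, the central idea — following the Bellin--Van de Wiele strategy as adapted to N-Graphs — is to find a \emph{split node}: a link $l$ whose removal decomposes $N$ into sub-N-Graphs on which the induction hypothesis applies, and such that $l$ corresponds to a rule of the sequent calculus that recombines the resulting sequent derivations.

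The heart of the argument is locating this split node, and here is where the machinery of whole empires and the ordering $\ll$ (Definition \ref{def:order}) does the work. First I would dispose of the easy cases by structural induction on the connectives appearing at terminal positions: if some conclusion of $N$ is the principal formula of a focussing logical link (say $\wedge$-$I$ or $\rightarrow$-$E$) whose premises are terminal, or if a premise is the principal formula of a defocussing link, one can often peel off that link directly. The genuinely hard case is when no terminal link can be peeled off immediately — precisely the situation that proof-nets handle via a splitting \emph{tensor}. In N-Graphs, however, the split node need not be terminal, so I would instead take a formula occurrence $A$ that is \emph{maximal} with respect to $\ll$ among the non-terminal occurrences (those that are neither premises nor conclusions, since by Corollary \ref{cor:terminal_kingdom} all of those have $wA = N$). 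Such a maximal $A$ exists because $\ll$ is a strict ordering on a finite set, and Lemmas \ref{lem:kingdom_order_contraction}, \ref{lem:kingdom_order_expansion} and \ref{lem:kingdom_order_meta} guarantee that the contraction, expansion and $\rightarrow$-$I$ links — the ones whose geometry fights their semantics — cannot create an occurrence strictly above $A$ that blocks splitting.

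The key structural claim is then that a $\ll$-maximal occurrence $A$ splits $N$: the link immediately below (or the configuration around) such an $A$ separates $N$ into pieces whose whole empires are pairwise disjoint, which by Lemma \ref{lem:nesting_kingdom_not_contain} means they share no vertices, so removing the splitting link leaves sub-N-Graphs $N_1, \dots, N_k$ each strictly smaller than $N$. By Lemmas \ref{lem:subNgraph_kingdom}, \ref{lem:union} and \ref{lem:intersection} these pieces are genuine sub-N-Graphs, and the induction hypothesis gives sequent derivations $SC(N_i)$. The maximality with respect to $\ll$ is what ensures that the split is clean — that the empires of the components do not overlap — exactly mirroring how a maximal empire yields a splitting link in the Bellin--Van de Wiele proof, but now accommodating a split node sitting in the middle of the proof rather than at a terminal formula.

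Finally I would glue the derivations back together: the splitting link $l$ is translated into the corresponding sequent calculus rule applied to the $SC(N_i)$, with weakening, contraction and exchange used as needed to match the one-to-one correspondence between the doors of the sub-N-Graphs and the formulas in the assembled sequent. The main obstacle, and the reason the earlier nesting lemmas and the treatment of the meta-edge were developed, is the global $\rightarrow$-$I$ rule: when the split node is a $\rightarrow$-$I$ conclusion $X \rightarrow Y$, one must verify that the canceled hypothesis $X$ and the premise $Y$ fall into the correct component so that discharging is sound in the sequentialized proof. Lemma \ref{lem:kingdom_order_meta} is exactly the tool that pins down the position of $X \rightarrow Y$ relative to $wA$, so the discharge can be carried out at the right inference step; handling this case carefully is where most of the work in the formal write-up will go.
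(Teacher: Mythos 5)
Your proposal has the right skeleton---induction on the number of links, and a $\ll$-maximal occurrence $A$ found via the order induced by whole empires---but it misses the essential mechanism of the hard case. You describe the split node as a \emph{link} $l$ whose removal decomposes $N$, and you propose to recombine the pieces by ``translating the splitting link into the corresponding sequent calculus rule.'' That is the proof-net paradigm (splitting tensor $\mapsto$ $\otimes$-rule), and it is precisely what cannot work here: the maximal occurrence $A$ may sit anywhere in the middle of the derivation, so there is in general no link whose sequent-rule image could rejoin the two halves. The paper's argument is that the split node is the \emph{vertex} $A$ itself: one proves the claim $wA = eA^{\wedge} \cup eA_{\vee} = N$, so that $N$ decomposes into the north empire (with $A$ as a lower door, yielding $\Gamma_1 \vdash \Delta_1, A$ by induction) and the south empire (with $A$ as an upper door, yielding $A, \Gamma_2 \vdash \Delta_2$), which intersect exactly in $\{A\}$ and are glued back together by the \emph{cut rule} on $A$. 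Your write-up never mentions cut, and your appeal to Lemma \ref{lem:nesting_kingdom_not_contain} (disjointness of the whole empires of two \emph{distinct} occurrences) is not the relevant fact: the pieces $eA^{\wedge}$ and $eA_{\vee}$ are not disjoint, and what must actually be proved is $wA = N$. The paper does this by contradiction: take a principal meta-switching for $eA^{\wedge}$, follow the path from $A$ toward a putative $Z \notin wA$, locate the last node $A_{\vee}$ of $eA_{\vee}$ on that path and its successor $W$, and derive $A \ll W$ in each of the three possible exit configurations via Lemmas \ref{lem:kingdom_order_contraction}, \ref{lem:kingdom_order_expansion} and \ref{lem:kingdom_order_meta}, contradicting maximality. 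Your sketch gestures at these lemmas but never sets up this argument.

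A second, smaller discrepancy concerns the non-split cases. The cases you treat as ``peeling off terminal logical links'' are not the ones that need separate treatment; the paper's preliminary cases are an initial expansion link (dispatched by left contraction), a final contraction link (right contraction), and a final $\rightarrow$-$I$ link (dispatched by $\rightarrow$-$R$ \emph{before} any split node is sought). In particular, the delicate $\rightarrow$-$I$ discharge that you defer to the splitting step is in fact handled as an easy terminal case; the split-node case explicitly assumes no final $\rightarrow$-$I$ link remains, and Lemma \ref{lem:kingdom_order_meta} enters only to rule out escaping $wA$ through an internal $\rightarrow$-$I$ link.
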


\begin{proof}
We proceed by induction on the number of links of {\small $N$}.

\begin{enumerate}
\item \label{item:sequent_base_case}
\emph{{\small $N$} does not have any link (it has only one vertex {\small $v$} labelled with {\small $A$})}:
this case is immediate. {\small $SC(N)$} is {\small $A \vdash A$}.

\item \label{item:sequent_one_link}
\emph{{\small $N$} has only one link}:
since {\small $N$} is a N-Graph, then this link is not a contraction, an expansion, neither a {\small $\rightarrow-I$}.
This case is simple, once there is a simple mapping between links and sequent calculus rules,
which makes the construction of {\small $SC(N)$} immediate (completeness proof \cite{oliveira:2001:phd,oliveira:2001}).
For example, in case {\small $\wedge-I$}:
$$\frac{A \vdash A \quad B \vdash B}{A, B \vdash A \wedge B} ~\wedge - R$$

\item \label{item:sequent_expansion}
\emph{{\small $N$} has an initial expansion link (the premise of the link is a premise of {\small $N$})}:
the induction hypothesis has built a derivation {\small $\Pi$} ending with 
{\small $\mathbf{A}, \mathbf{A}, \ldots, A_n \vdash B_1, \ldots, B_m$}. Then {\small $SC(N)$} is achieved by 
left contraction:
$$\Pi \quad \atop \displaystyle{\frac{\mathbf{A}, \mathbf{A}, \ldots, A_n \vdash B_1, \ldots, B_m}
{\mathbf{A}, \ldots, A_n \vdash B_1, \ldots, B_m}} ~ LC$$

\item \label{item:sequent_contraction}
\emph{{\small $N$} has a final contraction link (the conclusion of the link is a conclusion of {\small $N$})}:
here the induction hypothesis has built a derivation {\small $\Pi$} ending with 
{\small $A_1, \ldots, A_n \vdash \mathbf{B}, \mathbf{B}, \ldots, B_m$}. Hence {\small $SC(N)$} is obtained by
right contraction:
$$\Pi \qquad \quad \atop \displaystyle{\frac{A_1, \ldots, A_n \vdash \mathbf{B}, \mathbf{B}, \ldots, B_m}
{A_1, \ldots, A_n \vdash \mathbf{B}, \ldots, B_m}} ~RC$$

\item \label{item:sequent_meta}
\emph{{\small $N$} has a final {\small $\rightarrow-I$} link (the main conclusion of the link is a conclusion
of {\small $N$})}:
here the induction hypothesis has built a derivation {\small $\Pi$} ending with 
{\small $A_1, \ldots, A_n, \mathbf{A} \vdash \mathbf{B}, B_1, \ldots, B_m$}. Hence {\small $SC(N)$} is obtained by
{\small $\rightarrow-R$}:
$$\Pi \qquad \quad \quad \atop \displaystyle{\frac{A_1, \ldots, A_n, \mathbf{A} \vdash \mathbf{B}, B_1, \ldots, B_m}
{A_1, \ldots, A_n \vdash \mathbf{A \rightarrow B}, B_1, \ldots, B_m}} ~\rightarrow-R$$

\item \label{item:sequent_split}
\emph{{\small $N$} has more than one link, but no initial expansion link, no final contraction link and no final
{\small $\rightarrow-I$} link}:
this case is more complicated and is similar to that one in MLL$^-$ in which all terminal links are 
{\small $\otimes$}.
Yet here we have an additional challenge: the split node is in the middle of the proof.
Choose a formula occurrence {\small $A$} which is maximal with respect to {\small $\ll$}.
We claim that {\small $wA = eA^{\wedge} \cup eA_{\vee} = N$}. That is, {\small $A$} labels the split node.

Suppose not. Then let {\small $Z$} be a formula occurrence such that {\small $Z \in N - (eA^{\wedge} \cup eA_{\vee})$} 
and {\small $S_p^{\wedge}$} be a principal meta-switching for {\small $eA^{\wedge}$}. 
Given that {\small $Z \not\in eA^{\wedge}$}, the path
{\small $\rho$} from {\small $A$} to {\small $Z$} in {\small $S_p^{\wedge}(N)$} passes through a conclusion {\small $A'$} of
{\small $A$}. Let {\small $A_{\vee}$} be the last node which belongs to {\small $eA_{\vee}$} in {\small $\rho$} and 
{\small $W$} the next one in {\small $\rho$} (i.e. {\small $W \not\in eA_{\vee}$}). There are two cases for the edge
incident to {\small $A_{\vee}$} and {\small $W$}:

\begin{enumerate}
\item \emph{{\small $(A_{\vee},W)$} belongs to a contraction link whose other premise is not in {\small $eA_{\vee}$}}:
according to lemma \ref{lem:kingdom_order_contraction} we have {\small $A \ll W$}, contradicting the maximality of
{\small $A$} in {\small $\ll$}.
\item \emph{{\small $(W,A_{\vee})$} belongs to an expansion link whose other conclusion is not in {\small $eA_{\vee}$}}:
we apply lemma \ref{lem:kingdom_order_expansion} and conclude {\small $A \ll W$} here too. We contradict our choice again.
\item \emph{{\small $(A_{\vee},W)$} belongs to a {\small $\rightarrow-I$} whose {\small $W$} is the main conclusion and
the other formula is not in {\small $eA_{\vee}$}}: we use lemma \ref{lem:kingdom_order_meta} and also get
{\small $A \ll W$}.
\end{enumerate}

Thus {\small $wA = N$}. Let {\small $\Gamma_1, \Gamma_2, \Delta_1, \Delta_2$} be sets of formula occurrences such that:
{\small $\Gamma_1 \cup \Gamma_2 = \lbrace A_1, \ldots, A_n \rbrace$}, {\small $\Delta_1 \cup \Delta_2 = \lbrace B_1, \ldots, B_m \rbrace$} 
and {\small $\Gamma_1 \cap \Gamma_2 = \Delta_1 \cap \Delta_2 = \emptyset$}. 
Since {\small $eA^{\wedge}$} is a N-Graph and {\small $A$}
is a lower door, the induction hypothesis built {\small $SC(eA^{\wedge})$} ending with {\small $\Gamma_1 \vdash \Delta_1, A$}.
Once {\small $eA_{\vee}$} is a N-Graph and {\small $A$} is an upper door, the induction hypothesis made 
{\small $SC(eA_{\vee})$} ending with {\small $A, \Gamma_2 \vdash \Delta_2$}. So {\small $SC(N)$} is achieved by cut rule:
$$\displaystyle{
\displaystyle{SC(eA^{\wedge}) \atop \Gamma_1 \vdash \Delta_1, ~A} \qquad \displaystyle{SC(eA_{\vee}) \atop A, ~\Gamma_2 \vdash \Delta_2}
\over \Gamma_1, \Gamma_2 \vdash \Delta_1, \Delta_2}$$

\end{enumerate}
\qed
\end{proof}


\section{Conclusion}\label{sec:conclusion}

With N-Graphs, the structural links are based on the sequent calculus, but the logical links emulate the rules of ND.
Sequent calculus (classical and linear) have only introduction rules.
On the other hand, natural deduction and N-Graphs present elimination rules, so we need to adapt the notion of empire
from proof-nets to account for multiple-conclusion ND. This was done with south empires.

Another feature of N-Graphs and ND is the presence of improper rules. The introduction of ``{\small $\rightarrow$}''
has correctness criteria that do not apply only to local formulas, but to the whole deduction.
Thus, besides the presence of expansion link, which is  
defocussing and switchable, the sequentialization proof showed here also to accomplish a global rule in a system
that adopts Danos-Regnier's criteria.

By mapping derivations in sequent calculus to derivations in ND-style in the way we have just described we have been able to
formulate a new and rather general method of performing surgical cuts on proofs in multiple conclusion ND,
giving rise to subnets for classical logic.
In N-Graphs the split nodes, maximal with respect to the ordering induced by the union of the empires, may be
located anywhere in the proof, not only as a terminal node representing a conclusion.
We show an example in Fig. \ref{fi:empire}: every initial link is defocussing unswitchable and every final link is
focussing unswitchable in both N-Graphs. Their maximal nodes are highlighted and they split the proofs into two
correct proofs. This is illustrated in Fig. \ref{fi:cut} for the N-Graph on the left (Fig. \ref{fi:empire}).
The same procedure could also be applied to any of the two maximal nodes in the N-Graph on the right.

\begin{figure}
\centering
\includegraphics[width=300pt]{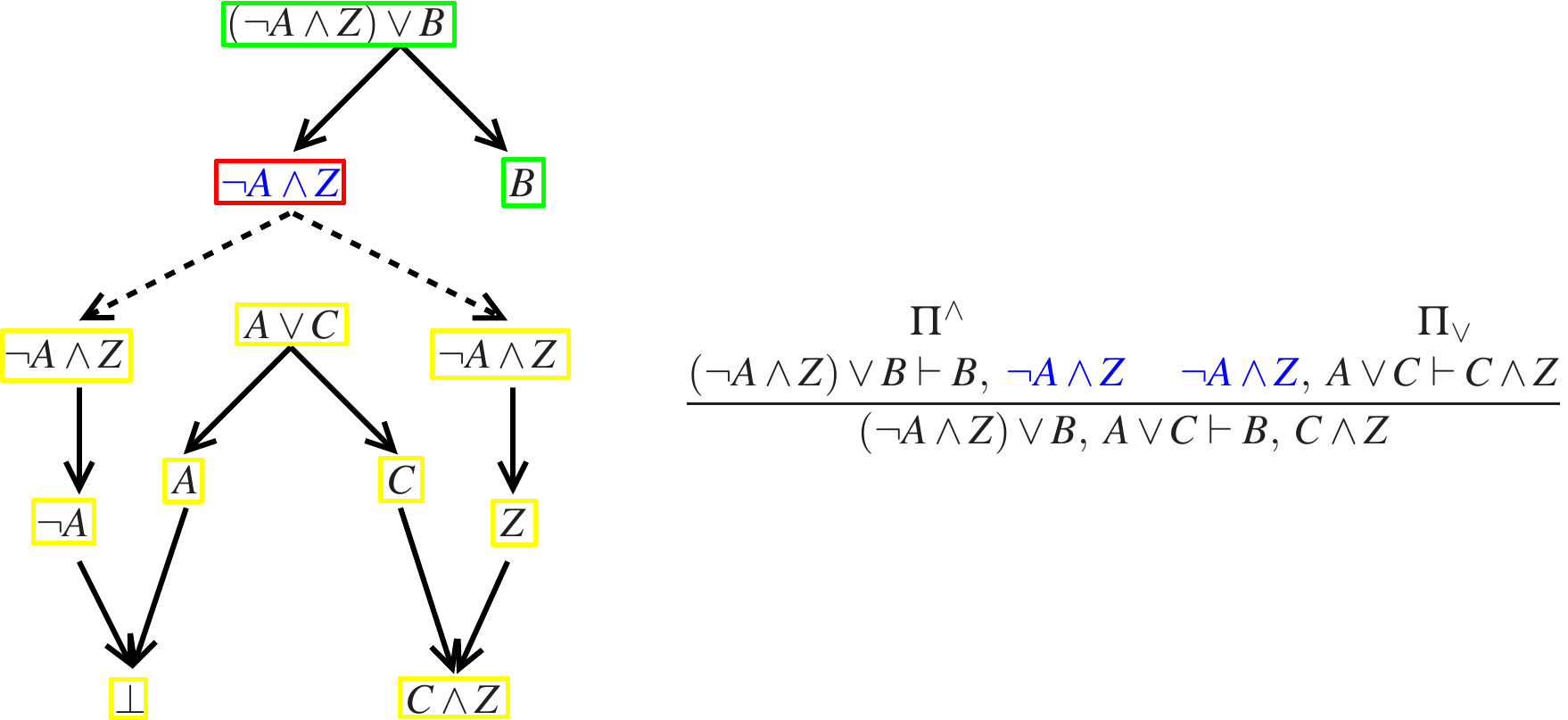}
\caption{Example of how to cut using the maximal node.}
\label{fi:cut}
\end{figure}




\begin{thebibliography}{4}

\bibitem{alves:2011} Alves, G. V., de Oliveira, A. G., de Queiroz, R. J. G. B.: Proof-graphs: a thorough cycle treatment, normalization and subformula property.
\textsl{Fundamenta Informaticae} 106, 119--147 (2011)

\bibitem{alves:2009} Alves, G. V., de Oliveira, A. G., de Queiroz, R. J. G. B.: Transformations via Geometric Perspective Techniques Augmented with Cycles Normalization. 
\textsl{16th Workshop on Logic, Language, Information and Computation - WoLLIC 2009, 2009, Tokyo},
Logic, Language, Information and Computation, 16th International Workshop, WoLLIC 2009. Berlin: Springer, 2009, v. 1, 84--98 (2009)

\bibitem{alves:2005} Alves, G. V., de Oliveira, A. G., de Queiroz, R. J. G. B.: Towards normalization for proof-graphs.
\textsl{Logic Cooloquium, 2005, Torino}, Bulletin of Symbolic Logic. United States of America: 2005, v. 11, 302--303 (2005)

\bibitem{andrade:2013} Andrade, L.,  Carvalho, R., de Oliveira, A., de Queiroz, R.: Linear Time Proof Verification on N-Graphs: A Graph Theoretic Approach. 
\textsl{20th Workshop on Logic, Language, Information and Computation - WoLLIC 2013, 2013, Darmstadt, Germany},
Logic, Language, Information and Computation, 20th International Workshop, WoLLIC 2013. Berlin: Springer, 2013, v. 8071, 34--48 (2013)

\bibitem{bellin:1995} Bellin, G., Van de Wiele, J.: Subnets of Proof-nets in \textit{MLL}$^-$.
\textsl{Advances in Linear Logic}, Girard, J.-Y, Lafont, Y., Regnier, L. (eds.), London Math Soc. Lect. Notes Series 222,
Cambridge University Press, 249--270 (1995)

\bibitem{blute:1996} Blute, R. F., Cockett, J. R. B., Seely, R. A. G., Trimble, T. H.: Natural deduction and coherence
for weakly distributive categories. \textsl{Journal of Pure and Applied Algebra} 113 (3), 229--296 (1996)

\bibitem{carbone:1997} Carbone, A.: Interpolants, Cut Elimination and Flow Graphs for the Propositional Calculus.
\textsl{Annals of Pure and Applied Logic} 83, 249--299 (1997)

\bibitem{carvalho:2014} Carvalho, R., Andrade, L., de Oliveira, A., de Queiroz, R.: Sequentialization for N-Graphs via
sub-N-Graphs. 
\textsl{21st Workshop on Logic, Language, Information and Computation - WoLLIC 2014, 2014, Valpara\'{i}so, Chile},
Logic, Language, Information and Computation, 21st International Workshop, WoLLIC 2014. Berlin: Springer, 2014, v. 8652,
94--108 (2014)

\bibitem{cruz:2013} Cruz, M. Q., de Oliveira, A. G., de Queiroz, R. J. G. B., de Paiva, V.: Intuitionistic N-graphs.
\textsl{Logic Journal of the IGPL}, Print (2013)

\bibitem{danos:1989} Danos, V., Regnier, L.: The Structure of Multiplicatives.
\textsl{Archive for Mathematical Logic} 28, 181--203 (1989)

\bibitem{fuhrmann:2007} F\"{u}hrman, C., Pym, D.: On categorical models of classical logic and the Geometry of Interaction.
\textsl{Mathematical Structures in Computer Science}, 17, 957--1027 (2007)


\bibitem{girard:1987} Girard, J.-Y.: Linear Logic. \textsl{Theoretical Computer Science} 50, 1--102 (1987)

\bibitem{girard:1991} Girard, J.-Y.: Quantifiers in Linear Logic II. \textsl{Nuovi problemi della logica e della filosofia della scienza}, Corsi, G., Sambin, G. (eds.), 2 (1991)

\bibitem{hughes:2006} Hughes, D. J. D.: Proofs Without Syntax. \textsl{Annals of Mathematics} 164(3), 1065--1076 (2006)

\bibitem{kneale:1958} Kneale, W.: The Province of Logic. \textsl{Contemporary British Philosophy} (1958)

\bibitem{lafont:1995} Lafont, Y.: From Proof-Nets to Interaction Nets. \textsl{Advances in Linear Logic}, J.-Y. Girard,
Y. Lafont, L. Regnier (eds.), London Mathematical Society Lecture Notes Series 222, 225--247, Cambridge University Press (1995)

\bibitem{mckinley:2010} McKinley, R.: Expansion nets: proof nets for propositional classical logic. \textsl{17th international
conference on Logic for programming, artificial intelligence, and reasoning - LPAR 2010},
C. G. Ferm\"{u}ller, A. Voronkov (eds.), Lecture Notes Computer Science 6397, 535--549, Indonesia, Springer (2010)

\bibitem{oliveira:2001:phd} de Oliveira, A. G. \textsl{Proofs from a Geometric Perspective}.
PhD Thesis, Universidade Federal de Pernambuco (2001)

\bibitem{oliveira:2001} de Oliveira, A. G, de Queiroz, R. J. G. B.: Geometry of Deduction via Graphs of Proof.
\textsl{Logic for Concurrency and Synchronisation}, R. de Queiroz (ed.), Kluwer, 3--88 (2003)

\bibitem{restall:2005} Restall, G., Paoli, F.: The Geometry of Non-Distributive Logics. \textsl{The Journal of Symbolic Logics} 70, 1108--1126 (2005)

\bibitem{robinson:2003} Robinson, E.: Proof Nets for Classical Logic. \textsl{Journal of Logic and Computation} 13, 777--797 (2003)

\bibitem{shoesmith:1978} Shoesmith, D.J., Smiley, T.J.: Multiple-Conclusion Logic. \textsl{Cambridge University Press}, London (1978)

\bibitem{statman:1974} Statman, R.: \textsl{Structural Complexity of Proofs}. PhD thesis, Stanford (1974)

\bibitem{ungar:1992} Ungar, A.M.: Normalization, Cut-elimination and the Theory of Proofs.
\textsl{CSLI Lecture Notes, Center for the Study of Language and Information}, vol. 28 (1992)

\end{thebibliography}
\end{document}